\newtheorem{definition}{Definition}
\newtheorem{lemma}{Lemma}
\newtheorem{remark}{Remark}
\begin{document}

\title{User Assignment and Resource Allocation for Hierarchical Federated Learning over Wireless Networks
\thanks{T. Zhang, K. Lam, and J. Zhao are with the Strategic Research Centre for Research in Privacy-Preserving Technologies and Systems, and the School of Computer Science and Engineering at Nanyang Technological University, Singapore. (Emails: tinghao001@e.ntu.edu.sg; kwokyan.lam@ntu.edu.sg; junzhao@ntu.edu.sg).


}
}

\author{Tinghao Zhang, Kwok-Yan Lam, \emph{Senior Member}, IEEE, Jun Zhao\vspace{-20pt}}

\maketitle
\begin{abstract}
The large population of wireless users is a key driver of data-crowdsourced Machine Learning (ML). However, data privacy remains a significant concern. Federated Learning (FL) encourages data sharing in ML without requiring data to leave users’ devices but imposes heavy computation and communications overheads on mobile devices. Hierarchical FL (HFL) alleviates this problem by performing partial model aggregation at edge servers. HFL can effectively reduce energy consumption and latency through effective resource allocation and appropriate user assignment. Nevertheless, resource allocation in HFL involves optimizing multiple variables, and the objective function should consider both energy consumption and latency, making the development of resource allocation algorithms very complicated. Moreover, it is challenging to perform user assignment, which is a combinatorial optimization problem in a large search space. This article proposes a spectrum resource optimization algorithm (SROA) and a two-stage iterative algorithm (TSIA) for HFL. Given an arbitrary user assignment pattern, SROA optimizes CPU frequency, transmit power, and bandwidth to minimize system cost. TSIA aims to find a user assignment pattern that considerably reduces the total system cost. Experimental results demonstrate the superiority of the proposed HFL framework over existing studies in energy and latency reduction.
\end{abstract}

\maketitle

\section{Introduction}





Machine learning (ML) has been widely used in a variety of areas such as image classification, natural language processing, and autonomous control~\cite{LeCun15}. ML usually requires a massive amount of training data to achieve satisfactory model performance. In recent years, data crowdsourcing has emerged as an important paradigm for acquiring training data needed by ML~\cite{Ma20,Yang22,Pandey20}. In particular, mobile application service providers have been deploying innovative reward schemes to incentivize mobile users to participate in data crowdsourcing. As part of the efforts to develop smart cities and promote the digital economy, many government agencies are also adopting data crowdsourcing to acquire data to fuel ML. Given the pervasive adoption of smart mobile devices, the large population of wireless mobile users plays a key role in data-crowdsourced ML. In the most straightforward manner, all participating mobile users transmit their crowdsourced data to some cloud server for centralized training, which is bottlenecked by limited bandwidth resources and privacy concerns~\cite{Liu22}. Furthermore, there are major concerns of the users that need to be addressed to gain people’s trust and remove the hurdles in supporting data crowdsourcing~\cite{Burg18}. These issues include cybersecurity of the central server, data privacy of the users and resource consumption on their mobile devices. In particular, battery and bandwidth consumption are of particular concern to most mobile users~\cite{Zhang22, Yang21, Li22}.

To address the problems of centralized training, a decentralized ML framework called Federated Learning (FL) is widely adopted as an approach for encouraging data sharing in ML without requiring data to leave the devices of the participants~\cite{McMahan17}. In FL, mobile users train the local model on their datasets, and the model weights are transmitted to a remote server for model aggregation. Compared with centralized training, FL provides better protection against privacy leakage as the training data are kept on mobile users~\cite{Ziyao22, Zhao21,yang2022lead}. While enhancing data privacy, however, FL requires a non-trivial amount of computation to be performed on the mobile device, hence inevitably leading to high energy consumption and depleting device batteries rapidly. Besides, uploading local models to the aggregation server consumes valuable wireless bandwidth. In addition, due to the bandwidth issues, FL needs to be designed carefully to scale up the size of data crowdsourcing, which will directly impact the model performance. FL relying on a single server for model aggregation tends to cause network congestion~\cite{Bonawitz19}.

Recently, the deployment architecture of FL was extended to address multi-server scenarios. The novel FL framework, namely Hierarchical Federated Learning (HFL), consists of several edge servers and one cloud server. Mobile users in HFL train the models locally and then send their model weights to the corresponding edge servers for edge aggregation. After several edge iterations, the edge servers transmit the edge models to the cloud for global aggregation. HFL has great potential in alleviating communication overheads. More importantly, the cybersecurity protection of HFL can be implemented more flexibly, thus reducing the impact on accuracy and computations~\cite{Wainakh20}. 

Despite the advantages, some issues of HFL have not been fully investigated yet. For example, the communication and computation resources should be properly allocated to mobile users. As mobile users will inevitably consume higher energy if pursuing a lower time delay, it is crucial for resource allocation to balance the tradeoff between the time delay and energy consumption. In addition, resource allocation is anticipated to jointly optimize both local resources (e.g., CPU frequency and transmit power of mobile users) and system resources (e.g., network bandwidth), as these resources have a significant impact on the energy consumption and latency of the HFL system. As a result, designing an efficient resource allocation algorithm for HFL is difficult due to its complex objective and the presence of multiple variables to be jointly optimized. Besides performing resource allocation, each mobile user will be allocated to an edge server before training. Without a feasible user assignment algorithm, some edge servers may experience prolonged latency during edge updates caused by straggling mobile users. As the cloud server conducts global aggregation only when receiving the model weights from all the edge servers, the straggling mobile users will finally deteriorate the overall latency of HFL. Moreover, if the straggling mobile users increase computation frequency to reduce the computational latency, the energy consumption to perform local update will notably rise. Therefore, user assignment plays a crucial role in reducing the energy consumption and latency of HFL. Whereas, it is hard to directly solve the user assignment problem since it is a combinatorial optimization problem with a large search space. Thus, how to carry out user assignment is another challenge of deploying HFL. 

This paper proposes a spectrum resource optimization algorithm (SROA) to handle the resource allocation problem and a two-stage iterative algorithm (TSIA) to deal with the user assignment problem in HFL. The main contributions of this paper are summarized as follows:

\begin{enumerate}
    \item We formulate a joint communication and computation optimization problem to minimize the total system cost, which is represented as the weighted sum of the time delay and energy consumption for training the entire HFL algorithm. To address this problem, we propose a novel HFL framework that contains two key modules: SROA-based resource allocation and TSIA-based user assignment. 
    
    \item We propose a spectrum resource optimization algorithm (SROA) to cope with the resource allocation problem in HFL. Given a user assignment pattern, SROA aims to minimize the total system cost by optimizing the CPU frequency, average transmit power, and the bandwidth allocated to edge servers and mobile users. To this end, SROA adopts the divide-and-conquer paradigm and utilizes binary search methods to obtain the locally optimal solution to the resource allocation problem.
    
    \item We propose a two-stage iterative algorithm (TSIA) for the user assignment problem. TSIA adjusts the user assignment pattern by iteratively transforming users from resource-intensive edge servers to more economical servers, thereby achieving a more balanced workload distribution. Among the patterns TSIA has explored, the user assignment pattern with the lowest total system cost is chosen as the solution to the user assignment problem.
     
    \item We conduct extensive numerical experiments to evaluate the proposed methods. Given a user assignment pattern, SROA achieves a lower total system cost than the existing resource allocation methods. TSIA outperforms the state-of-the-art user assignment methods in reducing the total system cost and convergence speed. Moreover, the proposed HFL framework achieves a lower total system cost than traditional FL.
\end{enumerate}
The rest of the paper is organized as follows. The related work is listed in Section~\ref{section2}. The system model and problem formulation are formulated in Section~\ref{section3}. The SROA algorithm is described in Section~\ref{resource_allo}. The TSIA algorithm is explained in Section~\ref{edge_allo}. Experimental results are analyzed in Section~\ref{section6} followed by the conclusion in Section~\ref{section7}.

\section{Related Work}\label{section2}

Federated Learning has become a mainstream topic in the field of machine learning. Some existing researches focus on defense algorithms such as differential privacy~\cite{Wei20,Truex20,Seif20}, secure multi-party computation~\cite{Wenhao21,Kanagavelu20,Byrd20}, and homomorphic encryption~\cite{Zhang21,Haokun21,Madi21}, to improve user security in FL. However, such defences will either drop the accuracy or incur additional computation overheads~\cite{Hongkyu21,Liu21,Huang21}. In contrast, HFL allows the defences to be deployed more flexibly. For example, it is unnecessary to implement secure aggregation for the upper layers of HFL as the cloud only processes the edge models and has no access to the users' update~\cite{Wainakh20}. 

Some researchers investigate resource allocation problems to alleviate the communication and computation overheads in FL. For example, \cite{Xu21} proposes a joint user scheduling and bandwidth allocation method to pursue better learning performance of FL. \cite{Zeng20Sky} designs a joint power allocation and scheduling algorithm to optimize the convergence speed of FL. \cite{Chen21} minimizes the time delay for training the entire FL algorithm by introducing a joint resource allocation and user scheduling scheme. \cite{Yang20} balances the tradeoff between local computation delay and wireless transmission delay to minimize the time delay of FL. \cite{Shi21} aims to minimize the time delay by optimizing bandwidth allocation. Nevertheless, \cite{Xu21, Zeng20Sky, Chen21, Yang20, Shi21} do not optimize the energy consumption in FL. \cite{Zeng20} conducts bandwidth allocation based on the channel conditions and computation capacities. However, the time delay is not optimized in~\cite{Zeng20}. \cite{Dinh20} proposes Federated Learning over wireless networks problem (FEDL) to balance the energy consumption and time delay of training FL. In traditional FL, however, the mobile users upload the model weights to a single remote server in these works, which still faces high communication overheads. HFL can mitigate the network congestion by partially transferring the aggregation processes to multiple edge servers~\cite{Luo20}. 

In terms of HFL, \cite{Liu20} proposes a basic HFL framework and theoretically analyzes the convergence rate of HFL. \cite{Wainakh20} analyzes the potential of HFL for addressing privacy issues in traditional FL. \cite{Jinliang20} presents an HFL framework that adopts a local-area network and a wide-area network to accelerate the training process and alleviate network traffic. However, resource allocation and user assignment are not discussed in the aforementioned studies. \cite{Mhaisen21} proposes a user assignment algorithm for HFL to speed up the convergence speed of HFL on the non-IID datasets. Nevertheless, \cite{Mhaisen21} does not consider the energy consumption and latency of HFL. \cite{Luo20} proposes HFEL that solves both jointly solve the resource allocation and user assignment problems. \cite{Shengli22} proposes a joint bandwidth allocation and user assignment framework to improve the time delay and learning performance of training HFL. Our proposed method is compared with \cite{Luo20} and \cite{Shengli22} and exhibits the superior performance.

\section{System Model}\label{section3} 
A typical HFL framework includes $N$ users indexed by $\mathcal{N} = \left\{1,2,...,N\right\}$, $M$ edge servers indexed by $\mathcal{M} = \left\{1,2,...,M\right\}$, and a cloud server. Each mobile user $n$ has a local dataset $\mathcal{D}_n$ with $D_n$ data samples. The models trained on the local devices are called local models, while the models aggregated at the edge and cloud are called edge models and global models, respectively. 

\subsection{HFL training}
At the $i$-th global iteration, the training process of HFL contains three steps: local training, edge aggregation, and global aggregation.

1) \textbf{Local Training}: We define $\bm{w}_n^{i,k,l}$ as the model parameters of local model $n$. Local training adopts a gradient descent algorithm to update the local models as follows:
\begin{equation}
    \bm{w}_{n}^{i,k,l+1} = \bm{w}_{n}^{i,k,l}- \beta\nabla F_n(\bm{w}_{n}^{i,k,l}),
\label{localtrain}
\end{equation}
where $F_n(\bm{w}_n^{i,k,l})$ is the loss function of local model $n$ at the $k$-th edge iteration of the $i$-th global iteration, and $\beta$ is the learning rate. Local training ends when achieving the maximum number of local iterations $L$. 

2) \textbf{Edge Aggregation}: To carry out edge aggregation, the mobile users upload their model weights to the edge servers. We define $\Psi = \left[\mathcal{N}_{m}\big|m\in\mathcal{M} \right]$ as an user assignment pattern, where $\mathcal{N}_{m}\ (m = 1,...,M)$ is the set of mobile users assigned to edge server $m$. Each edge conducts aggregation by averaging the model weights:
\begin{equation}
    \bm{w}_m^{i,k+1} = \frac{\sum_{n \in \mathcal{N}_{m}}D_n\bm{w}_n^{i,k,L}}{D_{\mathcal{N}_{m}}},
\label{edgeaggr}
\end{equation}
where $D_{\mathcal{N}_{m}} = \sum_{n \in \mathcal{N}_{m}}D_n$. The edge will transmit the averaged model weights back to the mobile users. Local training and edge aggregation are repeated until the maximum edge iteration number $K$ is achieved. 

3) \textbf{Global Aggregation}: In this phase, the edge servers upload the weights to the cloud server. The cloud server receives the model weights from the edges and averages them:
\begin{equation}
    \bm{w}^{i+1} = \frac{\sum_{m=1}^MD_{\mathcal{N}_{m}}\bm{w}_m^{i,K}}{D},
\label{cloudaggr}
\end{equation}
where $D = \sum_{m=1}^MD_{\mathcal{N}_{m}}$. The cloud server broadcasts the global model to the scheduled users through the edge servers. Algorithm~\ref{HFL_training} explains the training process of HFL at the $i$-th global iteration. Note that one global iteration contains $K$ edge iterations (i.e., $L\times K$ local iterations). The convergence proof of the HFL framework is provided in our online file~\cite{proof} due to space limitation.

\begin{algorithm}[t]
\textsl{}\setstretch{1.05}
\caption{HFL training at the $i$-th global iteration.}  
\label{HFL_training}
\begin{algorithmic}[1] 
\Require{Global model $\bm{w}^i$, set of mobile users $\mathcal{N} = \left\{1,2,...,N\right\}$, set of edge servers $\mathcal{M} = \{1,...,M\}$, user assignment pattern $\Psi = \left[\mathcal{N}_{m}\big| m \in\mathcal{M}\right]$, maximum local iteration $L$, maximum edge iteration $K$}
\Ensure{Global model $\bm{w}^{i+1}$}
\State Initialize all the local models $\left\{\bm{w}_n^{i,0,0}\Big|n\in\mathcal{N}\right\}$ using $\bm{w}^{i}$
\For{$k = 1$ to $K$}
    \For{each edge server $m \in \mathcal{M}$ in parallel}
		\For{each mobile user $n \in \mathcal{N}_{m}$ in parallel}
			\State Perform local training based on~\eqref{localtrain} for $L$ iterations to obtain $\bm{w}_n^{i,k,L}$
			\State Send $\bm{w}_n^{i,k,L}$ to the edge server $m$
		\EndFor
		\State Performs edge aggregation based on~\eqref{edgeaggr} to obtain a new $\bm{w}_m^{i,k+1}$
	\EndFor
\EndFor
\State All the edge servers transmit the edge models to the cloud server
\State The cloud server performs global aggregation based on~\eqref{cloudaggr} to obtain $\bm{w}^{i+1}$;
\State \Return $\bm{w}^{i+1}$.
\end{algorithmic}
\end{algorithm}

\begin{figure*}[t]
  \centering
  \includegraphics[width=17cm]{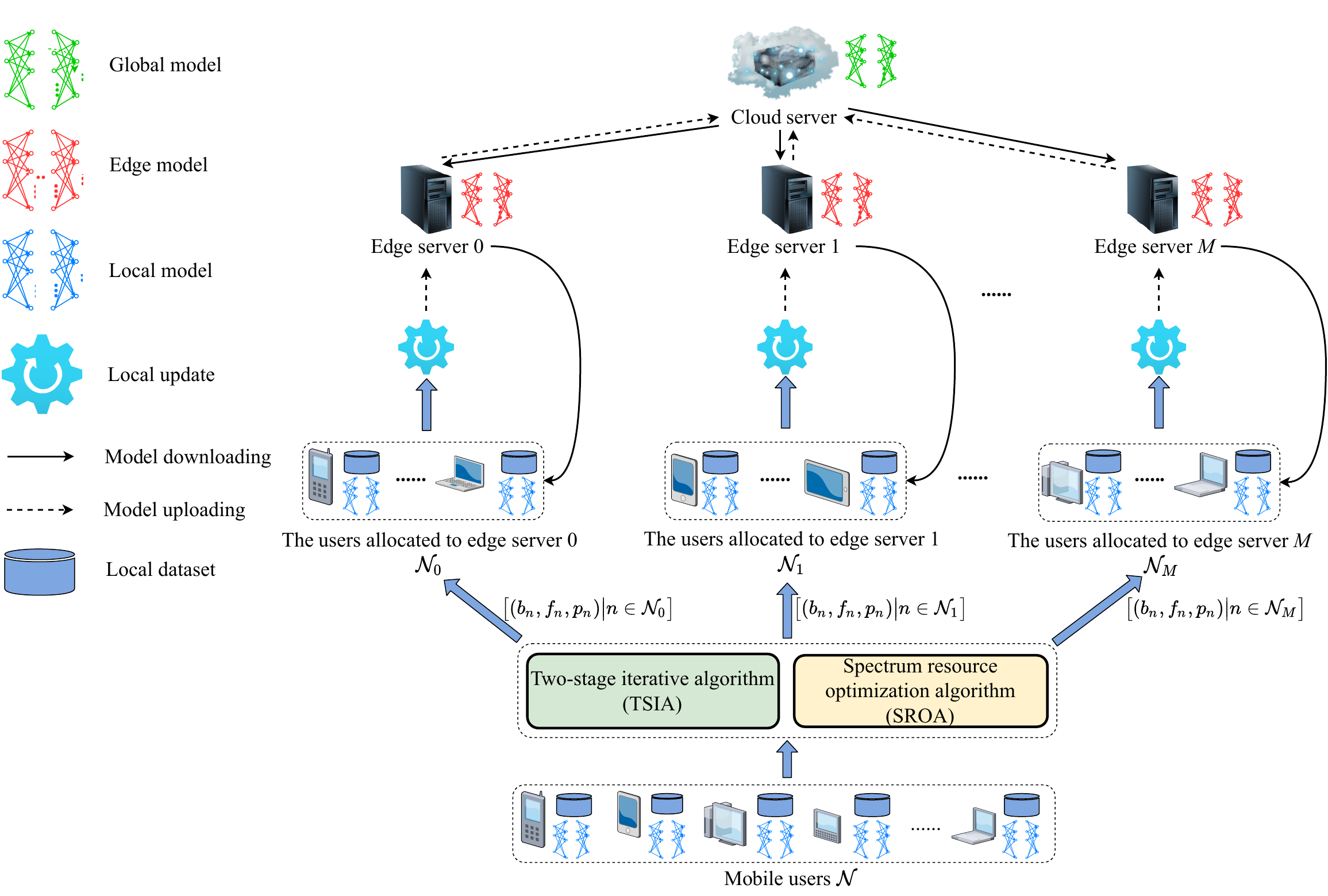}
\caption{Overview of the proposed HFL framework. Given a user assignment pattern $\Psi$, SROA obtains the optimal solution $\left\{(B_m, b_n, f_n, p_n) \big| n\in\mathcal{N}, m\in\mathcal{M} \right\}$ and calculates the objective value~\eqref{YY}. With SROA, TSIA searches for the optimal user assignment pattern which significantly reduces the objective value~\eqref{YY}. After finishing user assignment and resource allocation, HFL begins the training process according to Algorithm~\ref{HFL_training}.}
\label{FL_Framework}
\end{figure*}

\subsection{Energy consumption and time delay in HFL}
The energy consumption and time delay of HFL can be obtained by analyzing the computation and transmission processes.

We define $c_n$ as the number of CPU cycles for user $n$ to compute one data sample. Let $f_n$ be the CPU frequency of $n$. We assume that all data samples have the same size. At each edge iteration, the energy consumption $E_{n}^{\text{cmp}}$ and time delay $T_{n}^{\text{cmp}}$ for user $n$ to process $D_n$ data samples are:
\begin{equation}
    T_{n}^{\text{cmp}}= \frac{Lc_nD_n}{f_n},
\label{time_cmp}
\end{equation}
\begin{equation}
    E_{n}^{\text{cmp}} = \frac{\alpha}{2}L f_n^2 c_n D_n,
\label{energy_cmp}
\end{equation}
where $L$ is the maximum number of local iterations, and $\frac{\alpha}{2}$ denotes the effective capacitance coefficient of the user's computing chipset.

After $L$ local iterations, the mobile users will transmit the model weights to the corresponding edge server. In this work, we adopt a frequency-division multiple access (FDMA) protocol for data transmission. The achievable transmission rate of user $n$ is:
\begin{equation}
    r_n = b_n\text{log}_2(1+\frac{g_n^m p_n}{N_0b_n}),
\end{equation}
where $b_n$ is the bandwidth allocated to user $n$, $g_n^m$ is the channel gain between the mobile user $n$ and edge server $m$, $N_0$ is background noise, and $p_n$ is the transmit power. At each edge iteration, the energy consumption $E^{\text{com}}_{n}$ and time delay $T^{\text{com}}_{n}$ for user $n$ to transmit the model weights to the edge are:
\begin{equation}
    T^{\text{com}}_{n} = \frac{s}{r_n},
\end{equation}
\begin{equation}
    E^{\text{com}}_{n} = p_{n}T^{\text{com}}_{n},
\end{equation}
where $s$ denotes the size of the HFL model. Note that the time delay and energy consumption for mobile users to download the aggregated model weights are neglected compared with the upload mode. Based on the above discussion, the energy consumption $E_{m}$ and time delay $T_{m}$ of edge server $m$ after $K$ edge iterations are:
\begin{equation}
    T_{m} = K\max_{n \in \mathcal{N}_{m}}\left(T_{n}^{\text{cmp}} + T^{\text{com}}_{n} \right).
\end{equation}
\begin{equation}
    E_{m} = K\sum_{n \in \mathcal{N}_{m}}(E_{n}^{\text{cmp}} + E^{\text{com}}_{n}).
\label{Em}
\end{equation}

The aggregated model weights at each edge server are uploaded to the cloud server after $K$ edge iterations. The energy consumption $E^{\text{cloud}}_{m}$ and time delay $T^{\text{cloud}}_{m}$ for edge server $m$ to upload the model weights to the cloud server are derived as:
\begin{equation}
    T^{\text{cloud}}_{m} = \frac{s}{r_m},
\end{equation}
\begin{equation}
    E^{\text{cloud}}_{m} = p_mT^{\text{cloud}}_{m},
\end{equation}
where $s_m$ is the model size at edge server $m$, $r_m$ and $p_m$ are the transmission rate and the average transmit power of edge server $m$ for model uploading, respectively. 

The aggregation time at the edge and cloud servers is neglected as the time is much smaller than the time delay on the mobile users. As HFL cannot conduct global aggregation until the cloud server receives the model weights from all the edge servers, the time delay of HFL is determined by the slowest edge. Therefore, we have the following equations:
\begin{equation}
    T = \max_{m \in \mathcal{M}}\left( T^{\text{cloud}}_{m}+T_{m} \right), T_\text{sum} = I\cdot T,
\end{equation}
\begin{equation}
    E = \sum_{m \in \mathcal{M}}(E^{\text{cloud}}_{m} + E_{m}), E_\text{sum} = I\cdot E,
\label{E_total}
\end{equation}
where $I$ is the number of global iterations, $T$ and $E$ are the time delay and energy consumption of HFL at each global iteration, respectively. $T_\text{sum}$ and $E_\text{sum}$ are the time delay and energy consumption of training the entire HFL algorithm, respectively.\vspace{-10pt} 

\subsection{Problem formulation}
In this paper, the optimization problem is formulated to balance the tradeoff between the energy cost and the time delay:
\begin{align}
&\underset{\emph{\textbf{B}}, \emph{\textbf{b}}, \emph{\textbf{f}}, \emph{\textbf{p}}, \Psi}{\text{minimize}} \quad R = E_{\text{sum}} + \lambda T_{\text{sum}} \label{YY} \\
&\text{s.t.} \quad \sum_{n \in \mathcal{N}_{m}} b_n \leq B_m, \forall m \in \mathcal{M},  \tag{\ref{YY}a} \label{YYa} \\
&\quad \quad \sum_{m \in \mathcal{M}} B_m \leq B, \tag{\ref{YY}b} \label{YYb} \\
&\quad \quad 0 \leq f_n \leq f^{\text{max}}_n, \quad \forall n \in \mathcal{N}, \tag{\ref{YY}c} \label{YYc} \\
&\quad \quad 0 \leq p_n \leq p^{\text{max}}_n, \quad \forall n \in \mathcal{N}, \tag{\ref{YY}d} \label{YYd} \\
&\quad \quad \mathcal{N}_{\mu} \cap \mathcal{N}_{\nu} = \varnothing, \forall \mu, \nu \in \mathcal{M} \text{ and } \mu \neq \nu, \tag{\ref{YY}e} \label{YYe} \\
&\quad \quad \mathcal{N} = \bigcup_{m \in \mathcal{M}} \mathcal{N}_{m}, \tag{\ref{YY}f} \label{YYf}
\end{align}
where $\emph{\textbf{B}} = \left[B_m \big| m\in\mathcal{M} \right]$, $\bm{b} = \left[b_n \big| n\in\mathcal{N} \right]$, $\bm{f} = \left[f_n \big| n\in\mathcal{N} \right]$, $\bm{p} = \left[p_n \big| n\in\mathcal{N} \right]$, $\Psi = \left[\mathcal{N}_m \big| m\in\mathcal{M} \right]$. $B$ represents the total bandwidth resources for all the edge servers, and $B_m$ represents the bandwidth of edge server $m$. $f^{\text{max}}_n$ and $p^{\text{max}}_n$ represent the maximum CPU frequency and maximum average transmit power of user $n$, respectively. $\lambda$ is an importance weighting indicator. \eqref{YYa} and \eqref{YYb} denote the bandwidth constraints of the mobile users and edge servers, respectively. \eqref{YYc} and \eqref{YYd} denote the constraints for the CPU frequency and the transmission power, respectively. \eqref{YYe} ensures each mobile user only communicates with one edge server. \eqref{YYf} indicates that all the mobile users participate in the training process.

Problem~\eqref{YY} merges combinatorial optimization (i.e., the optimization of $\Psi$) with multi-variable optimization  (i.e., the optimization of $\emph{\textbf{B}}, \emph{\textbf{b}}, \emph{\textbf{f}}$, and $\emph{\textbf{p}}$). Therefore, it is intractable to directly obtain the global optimum of problem~\eqref{YY}. In this paper, we decompose problem~\eqref{YY} into a resource allocation problem and a user assignment problem and propose a novel HFL framework as shown in Fig.~\ref{FL_Framework} to attain the locally optimal solution to problem~\eqref{YY}. Specifically, given a known user assignment pattern $\Psi$, $\left\{(B_m, b_n, f_n, p_n) \big| n\in\mathcal{N}, m\in\mathcal{M} \right\}$ as well as the objective value $R$ can be obtained by a spectrum resource optimization algorithm (SROA). With SROA, a two-stage iterative algorithm (TSIA) is designed to find the optimal user assignment pattern $\Psi^\ast$. After performing SROA and TSIA, the proposed framework starts to train the HFL model based on Algorithm~\ref{FL_Framework}.

\section{Spectrum Resource Optimization Algorithm}\label{resource_allo}
We introduce the spectrum resource optimization algorithm (SROA) method before the user assignment method as the proposed user assignment method in this paper relies on SROA. Given $\Psi$, SROA aims to minimize the weighted sum of time delay and energy consumption for training the entire HFL algorithm. Based on \eqref{time_cmp}-\eqref{YY}, the optimization problem is formulated as follows:
\begin{align}
&\underset{\emph{\textbf{B}}, \emph{\textbf{b}}, \emph{\textbf{f}},\emph{\textbf{p}}}{\text{min}}\  E_{\text{sum}} + \lambda T_{\text{sum}} \notag \\
&=  I \Big( \sum_{m\in\mathcal{M}}  \Big(\sum_{n\in\mathcal{N}_m}\Big(  \frac{K p_n s}{b_n\log_2(1+\frac{g_n^m p_n}{N_0 b_n})}+\frac{\alpha}{2}f_n^2 K L c_n D_n \Big)+ \notag \\
&  E^{\text{cloud}}_{m}\Big)+ \lambda \max_{m\in\mathcal{M}}\Big(\max_{n\in\mathcal{N}_m}\Big( \frac{K s}{b_n\log_2(1+\frac{g_n^m p_n}{N_0 b_n})} + \frac{L K c_n D_n}{f_n }\Big) \notag \\
&  + T^{\text{cloud}}_{m} \Big) \Big)\label{XX}\\
&\text{s.t.} \quad \sum_{n \in \mathcal{N}_{m}} b_n \leq B_m, \forall m \in \mathcal{M},  \tag{\ref{XX}a} \label{XXa} \\
&\quad \ \; \, \sum_{m \in \mathcal{M}} B_m \leq B, \tag{\ref{XX}{b}} \label{XXb}\\
&\quad\ \;\; \,\,  0 \leq f_n \leq f^{\text{max}}_n,\ \forall n \in \mathcal{N}, \tag{\ref{XX}{c}} \label{XXc}\\
&\quad\ \;\; \,\,  0 \leq p_n \leq p^{\text{max}}_n,\ \forall n \in \mathcal{N}. \tag{\ref{XX}{d}} \label{XXd}
\end{align}

Obtaining the globally optimal solution for problem~\eqref{XX} poses significant challenges as multiple variables are required to be optimized in problem~\eqref{XX}. Besides, the objective function~\eqref{XX} involves a double summation (i.e., $\sum\limits_{m\in\mathcal{M}}\sum\limits_{n\in \mathcal{N}_{m}}$) and a double maximization (i.e., $\max\limits_{m\in\mathcal{M}}\max\limits_{n\in\mathcal{N}_m}$), thus further complicating problem~\eqref{XX}. To simplify the optimization problem, we introduce four variables $\psi_{n,m}$, $h_n$, $\delta_n$, and $t$ to rewrite problem~\eqref{XX} as follows:
\begin{align}
&\underset{\emph{\textbf{b}}, \emph{\textbf{f}},\emph{\textbf{p}}}{\text{min}}\ \sum_{n\in\mathcal{N}}\left( \frac{I K p_n s}{b_n\log_2(1+\frac{h_n p_n}{N_0 b_n})}  +\frac{\alpha}{2}f_n^2 I K L c_n D_n \right) +  \lambda t \label{ZZ} \\
& \text{s.t.} \quad \sum_{n \in \mathcal{N}} b_n \leq B,  \tag{\ref{ZZ}a} \label{ZZa} \\
&\quad\ \;\; \,\,  0 \leq f_n \leq f^{\text{max}}_n,\ \forall n \in \mathcal{N}, \tag{\ref{ZZ}{b}} \label{ZZb}\\
& \quad\ \;\; \,\,  0 \leq p_n \leq p^{\text{max}}_n,\ \forall n \in \mathcal{N}, \tag{\ref{ZZ}{c}} \label{ZZc} \\
& \quad\ \;\; \,\,   \frac{I K s}{b_n\log_2(1+\frac{h_n p_n}{N_0 b_n})} + \frac{I L K c_n D_n}{f_n} + \delta_n \geq t,\ \forall n \in \mathcal{N}, \tag{\ref{ZZ}{d}} \label{ZZd}
\end{align}
where $\psi_{n,m} = 1$ if mobile user $n$ is assigned to edge server $m$, and $\psi_{n,m} = 0$, otherwise, $h_n = \sum\limits_{m\in\mathcal{M}}\psi_{n,m}\ g^m_n$, $\delta_n =I \sum\limits_{m\in\mathcal{M}}\psi_{n,m}\ T^{\text{cloud}}_{m}$, and $t = \max\limits_{n\in\mathcal{N}}\Big( \frac{I K s}{b_n\log_2(1+\frac{h_n p_n}{N_0 b_n})} + \frac{I L K c_n D_n}{f_n} + \delta_n \Big)$. Besides, $\sum\limits_{m\in\mathcal{M}}\sum\limits_{n\in \mathcal{N}_{m}} E^{\text{cloud}}_{m}$ is a constant and can be omitted from the objective function. With $h_n$ and $\delta_n$, the $\sum\limits_{m\in\mathcal{M}}\sum\limits_{n\in \mathcal{N}_{m}}(\cdot)$ and $\max\limits_{m\in\mathcal{M}}\max\limits_{n\in\mathcal{N}_m}(\cdot)$ terms in problem~\eqref{XX} can be modified as $\sum\limits_{n\in\mathcal{N}}(\cdot)$ and $\max\limits_{n\in\mathcal{N}}(\cdot)$ in problem~\eqref{ZZ}, respectively. In addition, the bandwidth constraints~\eqref{XXa} and~\eqref{XXb} are merged into a single constraint~\eqref{ZZa}. Consequently, the resource allocation problem between the mobile users and multiple edge servers is transformed into a resource allocation problem between the mobile users and a single server with a total bandwidth resource of $B$, which facilitates the optimization procedure. Note that after solving problem~\eqref{ZZ}, $B_m$ can be obtained by $\sum\limits_{n\in\mathcal{N}_m} b_n$.

To solve problem~\eqref{ZZ}, SROA sets $t$ to a fixed value, and the optimal solutions of $\emph{\textbf{b}}, \emph{\textbf{f}},\text{ and }\emph{\textbf{p}}$ under the fixed $t$ are obtained. Next, SROA adopts a binary search algorithm to find the optimal $t^\ast$ for the problem~\eqref{ZZ}. SROA iteratively performs the above two steps and finally converges to a locally optimal solution $(\emph{\textbf{b}}^\ast, \emph{\textbf{f}}^\ast, \emph{\textbf{p}}^\ast)$.


\subsection{Optimal $\textbf{p}, \textbf{b},\text{and }\textbf{f}$ with a fixed $t$}
For simplifying the presentation, we first provide some notations in this work. 
\begin{align}
    A_n &= \frac{\alpha}{2}IKLc_nD_n,J_n = IKLc_nD_n,H_n = IKs,\label{W}\\
    U_n &= IKsp_n,Y_n = \frac{IKs}{b_n},X = \sum_{n\in\mathcal{N}}\frac{\alpha}{2}f_n^2 I K L c_n D_n,\label{WW} \\
    G_n &= \frac{p_n h_n}{N_0},Z_n = \frac{h_n}{N_0b_n},F_n = \frac{I L K c_n D_n}{f_n} + \delta_n. \label{WWWW}
\end{align}

\begin{algorithm}[t]
\textsl{}\setstretch{1.05}
\caption{Obtain optimal ($\emph{\textbf{b}},\emph{\textbf{f}}$) with fixed ($\emph{\textbf{p}}, t$)}  
\label{bf_opti}
\begin{algorithmic}[1] 
\Require{$B$, $\emph{\textbf{p}}$, $t$, $f^{\text{max}}_n$, $\varepsilon_0$, $b_\text{max}$}
\Ensure{Optimal CPU frequency $\emph{\textbf{f}}$ and bandwidth $\emph{\textbf{b}}$}
\State $f_n^{\text{low}} = \max(0, \frac{J_n}{t-\delta_n - \ln 2\cdot H_n/G_n})$, $f_n^{\text{up}} = f^{\text{max}}_n$, $\forall n \in \mathcal{N}$
\While {$\max\limits_{n\in\mathcal{N}}\left(\frac{f_n^\text{up} -f_n^\text{low} }{f_n^\text{up}}\right) > \varepsilon_0$}
    \State $f_n = \frac{f_n^{\text{low}}+f_n^{\text{up}}}{2}$, $\forall n \in \mathcal{N}$
    \State Apply a bisection method with an upper bound $b_\text{max}$ to calculate $\big[b_n\big|n \in$ $\mathcal{N}\big]$ using \eqref{Xa}
    \State $b_\text{sum} = \sum\limits_{n \in \mathcal{N}}b_n$
    \If{$b_\text{sum} < B$}
        \State $f_n^{\text{up}} = f_n$, $\forall n \in \mathcal{N}$
    \ElsIf{$b_\text{sum} > B$}
        \State $f_n^{\text{low}} = f_n$, $\forall n \in \mathcal{N}$
    \Else
        \State \textbf{break}
    \EndIf
\EndWhile
\State \Return $\emph{\textbf{b}} = [b_n| n\in \mathcal{N}]$, $\emph{\textbf{f}} = [f_n| n\in \mathcal{N}]$
\end{algorithmic}
\end{algorithm}

Given a fixed $t$, the optimization goal of problem~\eqref{ZZ} becomes minimizing the energy consumption. In this case, we take two steps to obtain the optimal $\emph{\textbf{p}}, \emph{\textbf{b}},\text{and }\emph{\textbf{f}}$. Firstly, we design a binary search algorithm to obtain the optimal $\emph{\textbf{b}},\text{and }\emph{\textbf{f}}$ with fixed $(t, \emph{\textbf{p}})$. Secondly, we propose another binary search algorithm to find the optimal $\emph{\textbf{p}}$.

For the first step, the optimization problem~\eqref{ZZ} is reformulated as follows:
\begin{align}
&\underset{\emph{\textbf{b}}, \emph{\textbf{f}}}{\text{min}}  \sum_{n \in \mathcal{N} }\Big(A_nf_n^2+\frac{U_n}{b_n\log_2(1+\frac{G_n}{b_n})}\Big)\label{X}\\
&\textrm{s.t.} \quad \frac{H_n}{b_n\log_2(1+\frac{G_n}{b_n})}+\frac{J_n}{f_n} + \delta_n \leq  t, \forall n \in \mathcal{N}  \tag{\ref{X}{a}} \label{Xa}\\
&\quad\ \; \, \sum_{n \in \mathcal{N}}b_n \leq B, \tag{\ref{X}{b}} \label{Xb}\\
&\quad\ \;\; \,\,  0 \leq f_n \leq f^{\text{max}}_n, \forall n \in \mathcal{N}.\tag{\ref{X}{c}} \label{Xc}
\end{align}
On the one hand, given fixed ($\emph{\textbf{p}}, t$), the energy consumption in~\eqref{X} becomes smaller with the increasing of $b_n$ and the decreasing of $f_n$. On the other hand, it can be seen from \eqref{Xa} that reducing $f_n$ increases $b_n$. Therefore, the key idea for solving~\eqref{X} is to minimize $f_n$ while satisfying the constraint~\eqref{Xb}. To this end, we provide Lemma~\ref{b_proof} to facilitate the optimization. With Lemma~\ref{b_proof}, we propose a binary search method to optimize $(\bm{b}, \bm{f})$ as shown in Algorithm~\ref{bf_opti}. In Algorithm~\ref{bf_opti}, $\varepsilon_0$ is the tolerance of the algorithm. The upper bound of $f_n$ is initialized as $f_n^\text{max}$, while the lower bound is derived based on Lemma~\ref{b_proof}. In each iteration, $f_n$ is updated as the average of its upper and lower bounds. In Line 4, the bisection method is employed using \eqref{Xa} to calculate $b_n$ with the known $f_n$ since $b_n\log_2(1+\frac{G_n}{b_n})$ is a monotonically increasing function with respect to $b_n$. If the sum of $b_n$ (i.e., $b_\text{sum}$) is lower than $B$, it indicates that HFL can achieve the time delay $t$ while preserving bandwidth resources of $(B - b_\text{sum})$. In such cases, $f_n$ can be further reduced to minimize energy consumption. Conversely, if $b_\text{sum}$ exceeds $B$, $f_n$ is increased. Algorithm~\ref{bf_opti} reaches convergence when the maximum value of $\frac{f_n^\text{up} -f_n^\text{low} }{f_n^\text{up}}$ becomes smaller than the tolerance $\varepsilon_0$.

\begin{lemma}
$b_n\log_2(1+\frac{G_n}{b_n})$ is a monotonically increasing function w.r.t. $b_n$ with an upper bound $\frac{G_n}{\ln2}$. The lower bound of $f_n$ is $\max(0, \frac{J_n}{t-\delta_n - \ln 2\cdot H_n/G_n})$.
\label{b_proof}
\end{lemma}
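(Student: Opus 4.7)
The plan is to prove the two claims of Lemma~\ref{b_proof} separately, since the bound on $f_n$ will be derived as a consequence of the upper bound on $b_n\log_2(1+G_n/b_n)$.

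For the monotonicity and upper bound of $g(b_n) := b_n\log_2(1+G_n/b_n)$, I would first rewrite $g$ in natural logs as $g(b_n) = \frac{b_n}{\ln 2}\ln\!\bigl(1+\tfrac{G_n}{b_n}\bigr)$ and differentiate. Setting $x = G_n/b_n > 0$, a short calculation gives
\begin{equation*}
g'(b_n) = \frac{1}{\ln 2}\left[\ln(1+x) - \frac{x}{1+x}\right].
\end{equation*}
The bracketed quantity vanishes at $x=0$ and has derivative $\frac{x}{(1+x)^2} > 0$ for $x>0$, so it is strictly positive; thus $g'(b_n) > 0$ and $g$ is strictly increasing in $b_n$. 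For the upper bound, I would use the elementary inequality $\ln(1+x) < x$ for $x>0$, which yields $g(b_n) < \frac{b_n}{\ln 2}\cdot\frac{G_n}{b_n} = \frac{G_n}{\ln 2}$. A quick L'H\^opital or Taylor argument then confirms $g(b_n)\to G_n/\ln 2$ as $b_n\to\infty$, so $G_n/\ln 2$ is the tight supremum.

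For the lower bound of $f_n$, I would start from constraint~\eqref{Xa} and isolate $f_n$:
\begin{equation*}
\frac{J_n}{f_n} \le t - \delta_n - \frac{H_n}{b_n\log_2(1+G_n/b_n)}.
\end{equation*}
Applying the upper bound $b_n\log_2(1+G_n/b_n) < G_n/\ln 2$ just established gives $\frac{H_n}{b_n\log_2(1+G_n/b_n)} > \frac{H_n \ln 2}{G_n}$, and substituting yields $\frac{J_n}{f_n} < t - \delta_n - \frac{\ln 2\cdot H_n}{G_n}$. If the right-hand side is positive, rearranging produces $f_n > \frac{J_n}{t-\delta_n-\ln 2\cdot H_n/G_n}$; combining this with the box constraint $f_n \ge 0$ gives the claimed lower bound $\max\bigl(0,\frac{J_n}{t-\delta_n-\ln 2\cdot H_n/G_n}\bigr)$.

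The only subtle step is handling the sign of $t-\delta_n-\ln 2\cdot H_n/G_n$: if it is non-positive the constraint~\eqref{Xa} cannot be satisfied for any finite $f_n$ together with the supremum of $b_n\log_2(1+G_n/b_n)$, so the bound reduces to the trivial $f_n \ge 0$, which is exactly why the $\max(0,\cdot)$ appears. I do not foresee any real technical obstacle beyond this case split; everything else is a direct computation of a derivative and an application of the standard inequality $\ln(1+x) < x$.
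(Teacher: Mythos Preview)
Your proposal is correct and follows essentially the same approach as the paper's proof: both differentiate $b_n\log_2(1+G_n/b_n)$, use a second-order argument to show the derivative is positive (the paper via $h''(b_n)<0$ and $\lim_{b_n\to\infty}h'(b_n)=0$, you via the equivalent observation that $\ln(1+x)-\tfrac{x}{1+x}$ is increasing in $x=G_n/b_n$ from $0$), invoke $\ln(1+x)<x$ for the upper bound, and then substitute that bound into constraint~\eqref{Xa} to extract the $f_n$ lower bound. Your treatment is in fact slightly more careful than the paper's, since you make the sign case on $t-\delta_n-\ln 2\cdot H_n/G_n$ explicit and note that $G_n/\ln 2$ is the supremum rather than an attained maximum.
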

\begin{proof}
See Appendix A.
\end{proof}

\begin{algorithm}[t]
\textsl{}\setstretch{1.05}
\caption{Obtain optimal $\emph{\textbf{p}}$ with fixed $t$}  
\label{p_opti}
\begin{algorithmic}[1]
\Require{$B$, $t$, $p^{\text{max}}_n$}
\Ensure{Optimal transmit power $\emph{\textbf{p}}$}
\State Initialize $p^{\text{up}}_n= p^{\text{max}}_n$, $p^{\text{low}}_n = \max(0, \zeta(2^{\frac{\gamma}{\eta}}-1))$, where $\gamma = \frac{IKs}{b_\text{max}}$, $\eta = t - \delta_n - \frac{IKLc_nD_n}{f_n^\text{max}}$, and $\zeta = \frac{N_0 b_\text{max}}{h_n}$, $\forall n \in \mathcal{N}$
\While {$\max\limits_{n\in\mathcal{N}}\left( \frac{p^{\text{up}}_n-p^{\text{low}}_n}{p^{\text{up}}_n} \right) > \varepsilon_1$}
    \State $p_n = \frac{p^{\text{up}}_n+p^{\text{low}}_n}{2}$, $\forall n \in \mathcal{N}$
    \State Obtain $(\emph{\textbf{b}},\emph{\textbf{f}})$ based on Algorithm~\ref{bf_opti}
    \State $b_\text{sum} = \sum\limits_{n \in \mathcal{N}}b_n$
    \If{$b_\text{sum} < B$}
        \State $p^{\text{up}}_n = p_n$
    \ElsIf{$b_\text{sum} > B$}
        \State $p^{\text{low}}_n = p_n$
    \Else
        \State \textbf{break}
    \EndIf  
\EndWhile
\State  \Return $\emph{\textbf{p}} = [p_n| n\in \mathcal{N}]$
\end{algorithmic}
\end{algorithm}
For the second step, we can investigate how to optimize $\emph{\textbf{p}}$ given the fixed $(\bm{b}, \bm{f}, t)$. The optimization problem~\eqref{ZZ} is rewritten as follows:
\begin{align}
&\underset{\emph{\textbf{p}}}{\text{min}}  \sum_{n \in \mathcal{N} }\Big(\frac{Y_np_n}{\log_2(1+Z_np_n)}\Big) + X \label{Y}\\
&\textrm{s.t.} \quad \frac{Y_n}{\log_2(1+Z_np_n)}+F_n \leq  t, \forall n \in \mathcal{N}  \tag{\ref{Y}{a}} \label{Ya}\\
&\quad\ \;\; \,\,  0 \leq p_n \leq p^{\text{max}}_n, \forall n \in \mathcal{N}.\tag{\ref{Y}{b}} \label{Yb}
\end{align}
There is a positive correlation between $\emph{\textbf{p}}$ and objective function~\eqref{Y}. Besides, reducing $\emph{\textbf{p}}$ increases $\emph{\textbf{f}}$ and $\emph{\textbf{b}}$ given the fixed $t$ according to \eqref{ZZd}. Therefore, Algorithm~\ref{p_opti} is proposed as a binary search method to find the optimal $\emph{\textbf{p}}$. Firstly, Algorithm~\ref{p_opti} initializes the upper and lower bounds of $p_n$ according to $p_n^\text{max}$ and Lemma~\ref{p_proof}, respectively. Next, $\emph{\textbf{p}}$ is set to be the average of its upper and lower bounds. Then, $(\emph{\textbf{b}},\emph{\textbf{f}})$ is obtained using Algorithm~\ref{bf_opti}. Finally, $p^{\text{up}}_n$ will be reduced if $b_\text{sum}$ is smaller than $B$, otherwise $p^{\text{low}}_n$ is increased. The iteration stops when the tolerance $\varepsilon_1$ is met.

\begin{lemma}
The lower bound of $p_n$ is $\max(0, \zeta(2^{\frac{\gamma}{\eta}}-1))$, where $\gamma = \frac{IKs}{b_\text{max}}$, $\eta = t - \delta_n - \frac{IKLc_nD_n}{f_n^\text{max}}$, and $\zeta = \frac{N_0 b_\text{max}}{h_n}$.
\label{p_proof}
\end{lemma}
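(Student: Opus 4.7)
The plan is to derive the lower bound by asking the minimal question: what is the smallest $p_n$ for which the per-user delay constraint \eqref{ZZd} can possibly be satisfied under the feasibility constraints on $b_n$ and $f_n$? Since the first term $\frac{IKs}{b_n\log_2(1+h_n p_n/(N_0 b_n))}$ on the left-hand side of \eqref{ZZd} is what depends on $p_n$, I would first make the rest of the LHS as small as it is permitted to be, so that the remaining budget left for this term is maximized and the minimum admissible $p_n$ is exposed.

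First I would fix $f_n=f_n^{\text{max}}$, which is the smallest the term $\frac{ILKc_nD_n}{f_n}$ can be made under constraint \eqref{ZZc}. Next I would use Lemma~\ref{b_proof}: the map $b_n\mapsto b_n\log_2(1+G_n/b_n)$ is monotonically increasing in $b_n$, so $\frac{IKs}{b_n\log_2(1+h_n p_n/(N_0 b_n))}$ is monotonically decreasing in $b_n$, and is therefore minimized at $b_n=b_{\text{max}}$. Substituting these two choices into \eqref{ZZd} yields the necessary condition
\begin{equation*}
\frac{IKs}{b_{\text{max}}\log_2\bigl(1+\tfrac{h_n p_n}{N_0 b_{\text{max}}}\bigr)} \;\le\; t-\delta_n-\frac{IKLc_nD_n}{f_n^{\text{max}}} \;=\; \eta,
\end{equation*}
which, using $\gamma=IKs/b_{\text{max}}$, rearranges to $\log_2(1+h_n p_n/(N_0 b_{\text{max}}))\ge \gamma/\eta$.

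Exponentiating and isolating $p_n$ then gives $p_n\ge \frac{N_0 b_{\text{max}}}{h_n}(2^{\gamma/\eta}-1)=\zeta(2^{\gamma/\eta}-1)$. Combining with the trivial physical bound $p_n\ge 0$ from \eqref{ZZc} yields $p_n\ge \max(0,\zeta(2^{\gamma/\eta}-1))$, which is the claimed lower bound. The step I expect to require the most care is the monotonicity argument that allowed us to plug in $b_n=b_{\text{max}}$ and $f_n=f_n^{\text{max}}$: I need to be explicit that these values minimize the LHS of the constraint (not the objective), so that any $p_n$ strictly below the derived threshold makes \eqref{ZZd} infeasible no matter how $b_n$ and $f_n$ are chosen. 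Everything else is algebraic manipulation of a logarithmic inequality and clipping at zero.
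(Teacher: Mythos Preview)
Your proposal is correct and reaches the same bound, but the route differs from the paper's in one notable way. The paper first isolates $p_n$ from the constraint~\eqref{Ya} to obtain $p_n\ge \frac{N_0}{h_n}\bigl(b_n 2^{\xi/b_n}-b_n\bigr)$ with $\xi=\frac{IKs}{t-F_n}$, and then performs a fresh second-derivative computation on $y(x)=x2^{\xi/x}-x$ to establish that this right-hand side is decreasing in $b_n$; only afterwards does it treat the $f_n$ dependence. You instead minimize the left-hand side of the delay constraint over $(b_n,f_n)$ \emph{before} solving for $p_n$, and for the $b_n$ part you simply invoke Lemma~\ref{b_proof} (monotonicity of $b_n\log_2(1+G_n/b_n)$) as a black box. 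Your approach is more economical: it avoids re-deriving a monotonicity fact that is essentially already contained in Lemma~\ref{b_proof}, and it makes transparent that the bound is exactly the feasibility threshold for $p_n$ when $b_n$ and $f_n$ are pushed to their most favorable values. The paper's derivative calculation is equivalent in content but longer; what it buys is self-containment of Appendix~B, at the cost of some redundancy with Appendix~A.
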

\begin{proof}
See Appendix B.
\end{proof}

\begin{algorithm}[t]
\textsl{}\setstretch{1.05}
\caption{Optimal solutions of the problem~\eqref{XX}}  
\label{T_opti}
\begin{algorithmic}[1] 
\Require{$\lambda$, $t^{\text{up}}$, $t^{\text{low}}$}
\Ensure{Optimal spectrum resource allocation}
\State $R^\ast = +\infty $
\While {$\frac{t^{\text{up}}-t^{\text{low}}}{t^{\text{up}}} > \varepsilon_2$}
    \State $t = \frac{t^{\text{low}}+t^{\text{up}}}{2}$
    \State Obtain ($\emph{\textbf{b}}$, $\emph{\textbf{f}}$, $\emph{\textbf{p}}$) using Algorithms~\ref{bf_opti} and~\ref{p_opti} with $t$
    \If{$b_\text{sum} > B$}
        \State $t^{\text{low}} = t$
        \State $\textbf{Continue}$
    \EndIf
    \State Calculate $E_\text{sum}$ based on~\eqref{E_total};
    \State $R = E_\text{sum} + \lambda t$
    \If{$R > R^\ast$ }
        \State $t^{\text{low}} = t$
    \Else
        \State $t^{\text{up}} = t$
        \State $R^\ast = R$
    \EndIf  
\EndWhile 
\State $t^\ast = t$;
\State Obtain $(R^\ast,\emph{\textbf{b}}^\ast, \emph{\textbf{f}}^\ast, \emph{\textbf{p}}^\ast)$ using Algorithms~\ref{bf_opti} and~\ref{p_opti} with $t^\ast$
\State $B^\ast_m = \sum\limits_{n\in\mathcal{N}_m} b_n$
\State \Return $(R^\ast, \emph{\textbf{B}}^\ast, \emph{\textbf{b}}^\ast, \emph{\textbf{f}}^\ast, \emph{\textbf{p}}^\ast)$.
\end{algorithmic}
\end{algorithm}

\subsection{Binary search algorithm for obtaining $t^\ast$}
A binary search algorithm is employed to obtain the optimal $t^\ast$ as shown in Algorithm~\ref{T_opti}. To begin with, the upper and lower bounds of $t$ are set as a large enough number and a small enough number, respectively. The optimal objective value $R^\ast$ is initialized as $+\infty$. At each iteration, $t$ is set as the average of its upper and lower bounds. Given $t$, $(\emph{\textbf{b}}, \emph{\textbf{f}}, \emph{\textbf{p}})$ is derived via Algorithms~\ref{bf_opti} and Algorithm~\ref{p_opti}, and the objective value~\eqref{XX} is obtained with the derived $(\emph{\textbf{b}}, \emph{\textbf{f}}, \emph{\textbf{p}})$. $t$ will be reduced if a smaller $R$ is obtained, and vice versa. Note that the sum of $b_n$ (i.e. $b_\text{sum}$) can be larger than $B$ if $t$ becomes too small. Therefore, at Lines 5$-$8 of Algorithm~\ref{T_opti}, we calculate $b_\text{sum}$ after carrying out Algorithm~\ref{bf_opti} and~\ref{p_opti}, and $t$ will be increased if $b_\text{sum} > B$. The solutions of~\eqref{XX} (i.e., $\emph{\textbf{B}}^\ast, \emph{\textbf{b}}^\ast, \emph{\textbf{f}}^\ast, \emph{\textbf{p}}^\ast$) can be obtained after $t^\ast$ is found.

\subsection{Complexity analysis}
To begin with, the complexity of Algorithm~\ref{bf_opti} is $\mathcal{O}\left( N^2 \log_2{\left(\frac{1}{\varepsilon_0}\right)}\log_2{\left(\frac{1}{\varepsilon_3}\right)} \right)$, where $\varepsilon_3$ is the tolerance for calculating $b_n$ at Line 5 via the bisection method. Algorithm~\ref{p_opti} involves complexity $\mathcal{O}\Big( N^3 \log_2{\left(\frac{1}{\varepsilon_0}\right)}\log_2{\left(\frac{1}{\varepsilon_1}\right)}\log_2{\left(\frac{1}{\varepsilon_3}\right)} \Big)$. As a result, the complexity of Algorithm~\ref{T_opti} is $\mathcal{O}\Big( N^3 \log_2{\left(\frac{1}{\varepsilon_0}\right)}\log_2{\left(\frac{1}{\varepsilon_1}\right)}\log_2{\left(\frac{1}{\varepsilon_2}\right)}\log_2{\left(\frac{1}{\varepsilon_3}\right)} \Big)$. Algorithms~\ref{bf_opti}$-$\ref{T_opti} are all established by binary search algorithms. Therefore, Algorithm~\ref{T_opti} will finally converge to a stable point and derive a locally optimal solution. According to the experimental results in Section~\ref{srm}, the derived solution of Algorithm~\ref{T_opti} outperforms existing baselines.

\section{User Assignment Optimization via a Two-stage Iterative Algorithm}\label{edge_allo}

Below we first present the preliminaries of the two-stage iterative algorithm (TSIA) in Section~\ref{Criterion}, and then explain TSIA details in Section~\ref{overview-TSIA}. 

\subsection{Preliminaries of TSIA}\label{Criterion}
Before elaborating TSIA, we introduce several important definitions. First of all, we define $R_m$ as the weighted sum of the energy consumption and time delay of edge server $m$:\vspace{3pt}
\begin{equation}
    R_m = (E^{\text{cloud}}_{m} + E_{m}) + \lambda \left( T^{\text{cloud}}_{m}+T_{m} \right). \vspace{3pt}
\end{equation}
Then, we define \emph{\textbf{costly server}}, \emph{\textbf{economic server}}, \emph{\textbf{costly user}}, and \emph{\textbf{economic user}}. According to Definition~\ref{serverDefinition}, the costly and economic servers are the edge servers with the highest and the lowest $R_m$, respectively. According to Definition~\ref{B-TSIA}, for edge server $m \in \mathcal{M}$, the user that utilizes the most bandwidth resources of the edge server is regarded as the costly user, while the user allocated the least bandwidth is the economic user. Generally, each edge server has only one costly user and one economic user.

\begin{definition}
In TSIA, the costly edge $m^+$ and economic edge $m^-$ are defined as
\begin{align}
m^+ = \underset{m \in \mathcal{M}}{\emph{argmax}}(R_m) \\
m^- = \underset{m \in \mathcal{M}}{\emph{argmin}}(R_m)
\end{align}
\label{serverDefinition}
\end{definition}
\begin{definition}
In TSIA, the costly user $n_m^+$ and economic user $n_m^-$ of edge server $m \in \mathcal{M}$ are defined as
\begin{align}
n_m^+ = \underset{n \in \mathcal{N}_{m}}{\emph{argmax}}(b_n) \\
n_m^- = \underset{n \in \mathcal{N}_{m}}{\emph{argmin}}(b_n)
\end{align}
\label{B-TSIA}
\end{definition}

\begin{algorithm}[!t]
\textsl{}\setstretch{1.05}
\caption{Two-stage iterative algorithm (TSIA)}
\label{firter}
\begin{algorithmic}[1] 
\Require{$\mathcal{N}$, $\mathcal{M}$}
\Ensure{Optimal user assignment pattern $\Psi^\ast$}
\State $stage = 1$, $q = 1$
\While{$stage\leq 2$}
    \If{$stage = 1$}\Comment{The first stage}
        \State Initialize $\Psi_q$ based on geographical distance
        \State $\Psi^\ast = \Psi_q$
        \State Obtain $\left[(R, R_m, B_m, b_n,f_n, p_n) \big| n\in\mathcal{N}, m\in\mathcal{M} \right]$ via Algorithm~\ref{T_opti}
        \State $R^\ast = R$
    \Else\Comment{The second stage}
        \State $\Psi_q = \Psi^\ast$
    \EndIf
    \Repeat
        \State $m^+ = \underset{m \in \mathcal{M}}{\mathrm{argmax}}(R_{m}),\ m^- = \underset{m \in \mathcal{M}}{\mathrm{argmin}}(R_{m})$
        \If{$stage = 1$}
            \State Assign the costly user $n^+_{m^+}$ from $\mathcal{N}_{m^+}$ $\mathcal{N}_{m^-}$
        \Else
            \State Assign the economic user $n^-_{m^+}$ from $\mathcal{N}_{m^+}$ to $\mathcal{N}_{m^-}$
        \EndIf
        \State Obtain $\left[(R, R_m, B_m, b_n,f_n, p_n) \big| n\in\mathcal{N}, m\in\mathcal{M} \right]$ via Algorithm~\ref{T_opti}
        \If{$R < R^\ast$}
            \State $\Psi^\ast = \Psi_q$
            \State $R^\ast = R$
        \EndIf
        \State $q = q+1$
    \Until{the algorithm achieves convergence}
    \State $stage = stage + 1$
\EndWhile
\State \Return $\Psi^\ast$, $\left[(B_m, b_n,f_n, p_n) \big| n\in\mathcal{N}, m\in\mathcal{M} \right]$
\end{algorithmic}
\end{algorithm}

\subsection{Details of TSIA}\label{overview-TSIA}
The core idea of TSIA is to bridge the performance gap between the costly server and the economic server by transferring the user from the costly server to the economic server. The costly user will be transferred at the first stage, while the second stage aims to transfer the economic user. The mechanism of TSIA is straightforward. When the costly user $n_m^+$ is transferred from edge server $m$ to other servers, edge server $m$ regains large amounts of bandwidth. When SROA is conducted before and after such a transfer, the results of SROA vary noticeably. In contrast, the economic user $n_m^-$ does not consume many bandwidth resources of edge server $m$. The transfer of $n_m^-$ can be viewed as a fine-tuning process as it has a minor impact on the total system cost. TSIA results in a balanced user assignment pattern that significantly reduces the objective value~\eqref{YY}.


Algorithm~\ref{firter} describes the implementation details of TSIA. We define $\Psi_q$ ($q\in \mathbb{Z}^+$) as the user assignment pattern at the $q$-th assigning step. In the first stage of TSIA (i.e., $stage = 1$), the user assignment pattern $\Psi_q$ and $\Psi^\ast$ are initialized based on geography distance (Line 5). Then, SROA is carried out given the initial user assignment pattern to initialize the optimal objective value $R^\ast$ (Lines 6-7). Next, TSIA iteratively identify the costly server $m^+$, economic server $m^-$, and the costly user of the costly server (i.e., $n^+_{m^+}$) based on Definition~\ref{serverDefinition} and~\ref{B-TSIA}; the costly user $n^+_{m^+}$ is transmitted to edge server $m^-$  (Lines 12-14). The second stage (i.e., $stage = 2$) shares many same steps with the first stage. The differences are that the second stage initializes the user assignment pattern using the pattern derived in the first stage (Line 9), and the economic user $n^-_{m^+}$ is transferred from the costly server $m^+$ to the economic server $m^-$ (Line 16).

\begin{remark}
TSIA is a deterministic policy. Specifically, given $\Psi_q$, the result of Algorithm~\ref{T_opti}, the determination of $\left\{m^+, m^-, n^+_m, n^-_m\right\}$, and the action taken by TSIA (Line 14 or Line 16 in Algorithm~\ref{firter}) are all deterministic. Therefore, with a fixed $\Psi_q$, TSIA always derives the same $\Psi_{q+1}$.
\label{deterministic}
\end{remark}

We propose Remark~\ref{deterministic} to facilitate the convergence justification of TSIA. In addition to Remark~\ref{deterministic}, the number of combinations of user assignment patterns is $M^N$ given the numbers of mobile users $N$ and edge servers $M$. As a result, if TSIA continues assigning the mobile users after $M^N$ steps, there must exist one unique assignment pattern $\Psi_q (q\in \mathbb{Z}^+$ and $q\in[1,M^N])$ that satisfies $\Psi_{M^N+1} = \Psi_q$, which indicates the assigning pattern returns to the $q$-th assigning step. Theoretically, the complexity of TSIA is $\mathcal{O}(M^N)$. Empirically, however, TSIA requires much fewer assigning iterations to reach convergence according to Fig.~\ref{change}. For example, given $N=50$ and $M=5$, the assigning iterations for HFL to converge roughly ranges from 20 to 50, which is notably lower than $5^{50}$.




\section{Performance Evaluation}\label{section6}
\subsection{Experimental setup}\label{setup}

We consider $N = 50$ mobile users and $M = 5$ edge users randomly distributed in a square of side 500 m, and the centre of the square is the cloud server. The total bandwidth resources of the edge servers are randomly distributed in $[10,1000]\ \text{KHz}$. The path loss model is $128.1 + 37.6\log_{10}d$(km), and the standard deviation of shadow fading is $8$dB. The power spectrum density of the additive Gaussian noise is {$N_0 = -174$ dBm/MHz.} 

We set an equal maximum CPU frequency $f^\text{max}_1 = ... = f^\text{max}_N = 5\ \text{GHz}$ and an equal maximum transmit power $p^\text{max}_1 = ... = p^\text{max}_N = 23\ \text{dBm}$. $c_n\ (n \in \mathcal{N})$ is randomly distributed in $[1,10] \times 10^4$ cycles/sample. The effective capacitance coefficient $\alpha = 2\times10^{-28}$. The maximum local iteration $L$ and maximum edge iteration $K$ are both 5 in this paper. The dataset for training HFL and the network structure of the HFL model are provided as follows.

\textbf{FashionMNIST}: FashionMNIST~\cite{xiao2017} is a dataset that contains Zalando's article images. It consists of 60000 training samples and 10000 testing samples. Each data sample is a $28 \times 28$ grey-scale image. There are a total of 10 classes in FashionMNIST. The HFL model used for training FashionMNIST consists of two $5\times5$ convolution layers, and the output dimension of the first convolution layer and the second convolution layer are 10 and 12, respectively. The convolution layers are followed by $2\times2$ max pooling. The output of the last pooling layer is flattened and fed into a linear layer. The size of the local dataset  $D_n$ is randomly distributed in $[1000,1400]$. The size of the model weights $s = 446\ \text{KB}$. 

\textbf{CIFAR-10}: CIFAR-10~\cite{Krizhevsky09} is a ten-class dataset that contains 60000 $32 \times 32$ color images. Each class has 5000 training images and 1000 testing images. The HFL model has two $5\times5$ convolution layers and two linear layers. The output channels of the two convolution layers are 10 and 20, respectively. Each convolution layer is followed by $2\times2$ max pooling. The output of the last pooling layer is connected with two linear layers. The size of the local dataset  $D_n$ is randomly distributed in $[800,1200]$. The size of the model weights $s = 523\ \text{KB}$. 

\textbf{ImageNette}: ImageNette~\cite{Howard20} is formulated by selecting ten easily classified classes from ImageNet~\cite{Olga15}. It includes 9469 training samples and 3925 testing samples. We resize the images to $32 \times 32$. The HFL model contains two $5\times5$ convolution layers and two linear layers. The output channels of the two convolution layers are 15 and 28, respectively. The convolution layers are followed by $2\times2$ max pooling. The output dimensions of the linear layers are 300 and 10, respectively. The size of the local dataset $D_n$ is randomly distributed in $[150,220]$. The size of the model weights $s = 881\ \text{KB}$.

Among these datasets, the images in FashionMNIST are the simplest and easiest to be identified, while ImageNette has an incremental level of difficulty over FashionMNIST and CIFAR-10.

\subsection{Results of resource optimization methods}\label{srm}
We compare the proposed SROA with several state-of-art baselines. Although SROA is primarily designed for FDMA systems, its applicability can be readily extended to alternative communication schemes such as Orthogonal Frequency Division Multiple Access (OFDMA). Therefore, we assess the performance of SROA in both FDMA and OFDMA scenarios. For the baselines in the FDMA scheme, JDSRA~\cite{Shi21} is used for comparison. For the baselines in the OFDMA scheme, ERA~\cite{Zeng20} and the resource allocation method in JUARA~\cite{Shengli22} are used as the baseline methods. In addition, FEDL~\cite{Dinh20} and the resource allocation method in HFEL~\cite{Luo20} can be applied for both FDMA scheme and OFDMA scheme. The experiments are conducted under $D_n \in [150,220]$, $s = 881\ \text{KB}$, $I = 80$, and the mobile users are simply assigned to the edge server closest to their location.

\begin{figure}[!t]
\centering
\subfigure[FDMA scheme]{
\begin{minipage}[t]{1\linewidth}
\centering
\includegraphics[width=6cm]{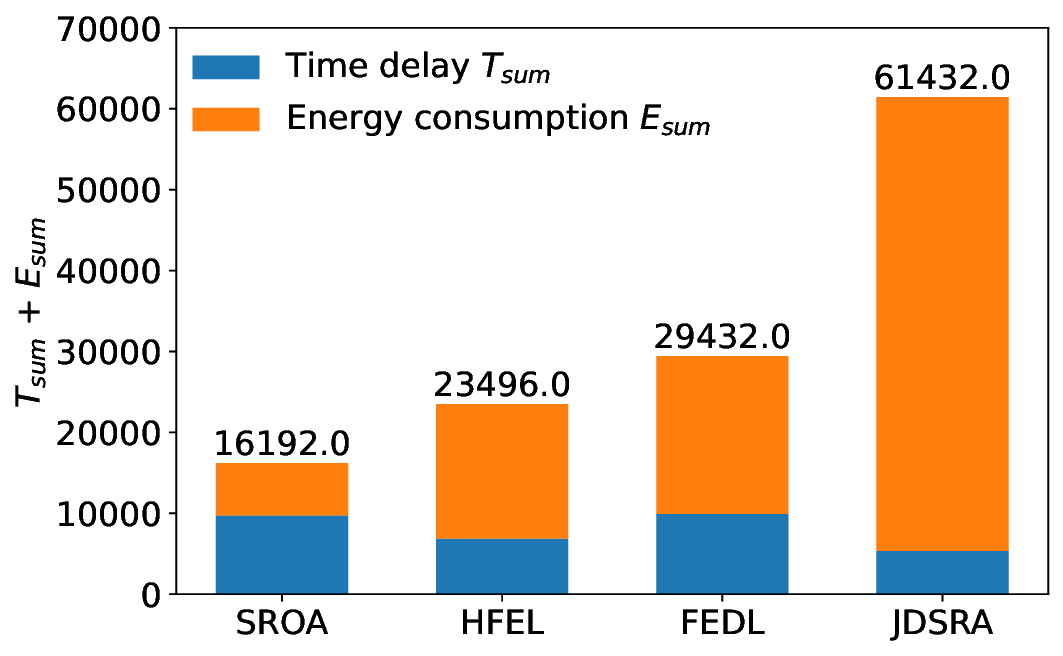}
\label{FDMA_result}
\end{minipage}%
}%

\subfigure[OFDMA scheme]{
\begin{minipage}[t]{1\linewidth}
\centering
\includegraphics[width=6cm]{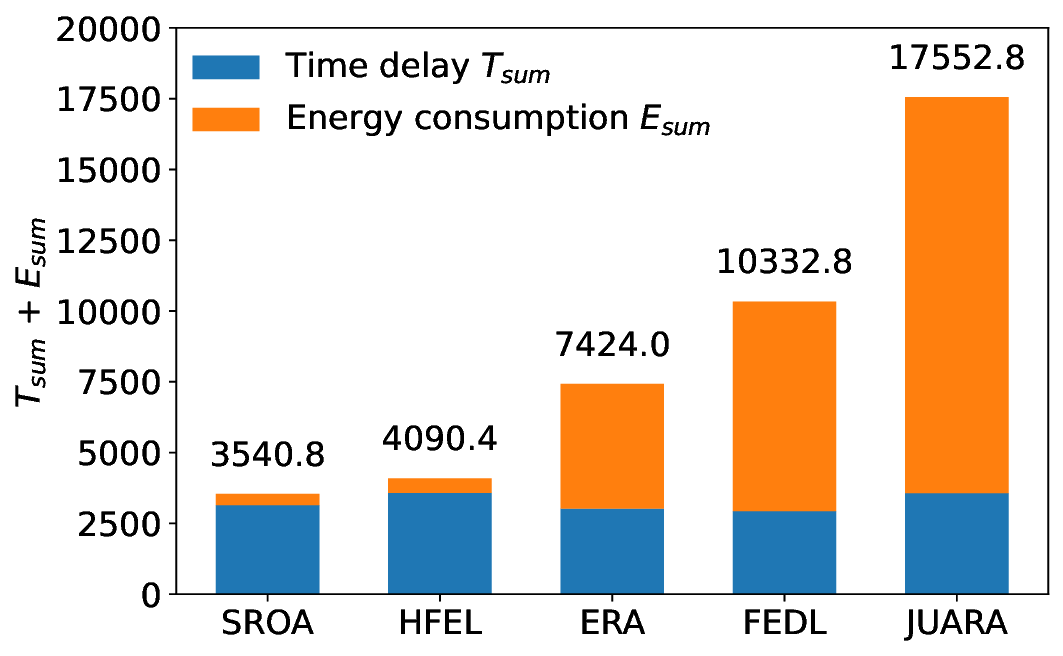}
\label{OFDMA_result}
\end{minipage}%
}%
\caption{Objective value~\eqref{YY} using different resource allocation methods. For HFEL, FEDL, and the proposed SROA, the importance weight $\lambda = 1$.}
\label{FDMA_OFDMA}
\end{figure}

\begin{figure}[!t]
\centering
\subfigure[FDMA scheme]{
\begin{minipage}[t]{1\linewidth}
\centering
\includegraphics[width=6cm]{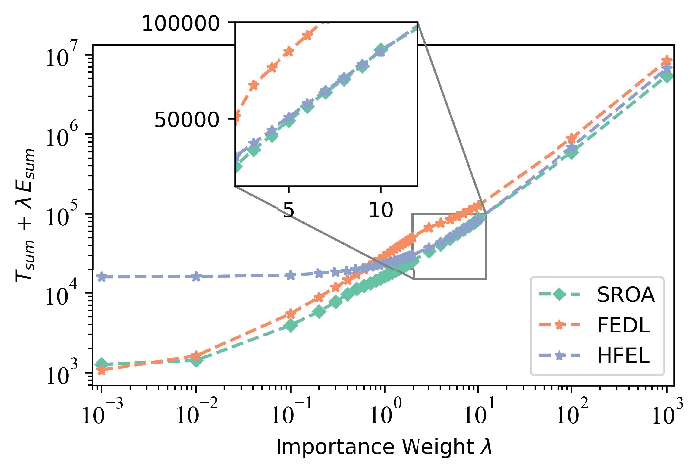}
\label{obj_FDMA}
\end{minipage}%
}%

\subfigure[OFDMA scheme]{
\begin{minipage}[t]{1\linewidth}
\centering
\includegraphics[width=6cm]{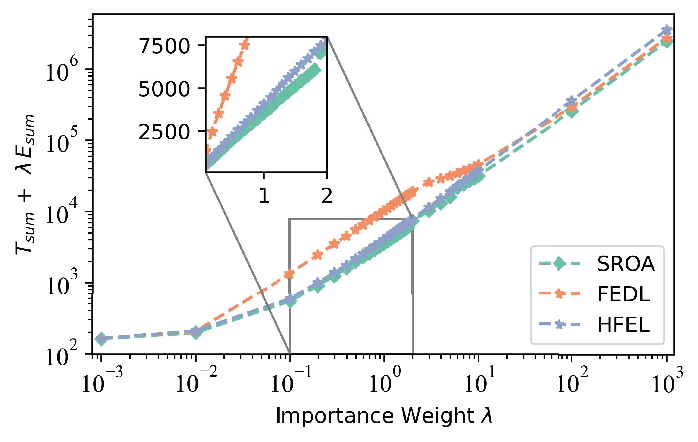}
\label{obj_OFDMA}
\end{minipage}%
}%
\caption{Objective value~\eqref{YY} using SROA, HFEL, and FEDL with different $\lambda$.}
\label{obj_FDMA_OFDMA}
\end{figure}

Figure~\ref{FDMA_OFDMA} illustrates the objective value~\eqref{YY} obtained through various resource allocation methods. In our analysis, we assign equal importance weights ($\lambda=1$) to our proposed method (SROA), FEDL, and HFEL, indicating an equal emphasis on time delay and energy cost. From the results depicted in Figure~\ref{FDMA_OFDMA}, it is evident that SROA achieves the lowest objective value in both the FDMA and OFDMA schemes. This outcome can be attributed to the comprehensive nature of SROA, which considers multiple factors such as time delay and energy consumption, and optimizes the allocation of bandwidth, CPU frequency, and average transmit power in a joint manner. In contrast, existing baselines either optimize fewer parameters or considers only one perspective (either energy consumption or time delay). Figure~\ref{obj_FDMA_OFDMA} displays the objective value~\eqref{YY} obtained through different resource allocation methods for various values of $\lambda$. The range of $\lambda$ spans from $10^{-3}$ to $10^3$, and a logarithmic scale is employed on the y-axis to enhance observation. It can be noted from Figure~\ref{obj_FDMA_OFDMA} that SROA consistently achieves the lowest objective values across different $\lambda$ values, with the exception of the FDMA scheme when $\lambda = 10$. Additionally, by examining the enlarged subfigures, it can be observed that the interval between adjacent tick marks is 50000 in Fig.\ref{obj_FDMA} and 2500 in Fig.\ref{obj_OFDMA}, respectively. Although the curves of SROA and HFEL appear visually close to each other, the objective value of SROA is significantly lower than that of HFEL. This observation further demonstrates the superiority of SROA over HFEL and FEDL in terms of achieving lower objective values.

\begin{figure}[t]
  \centering
  \includegraphics[width=7cm]{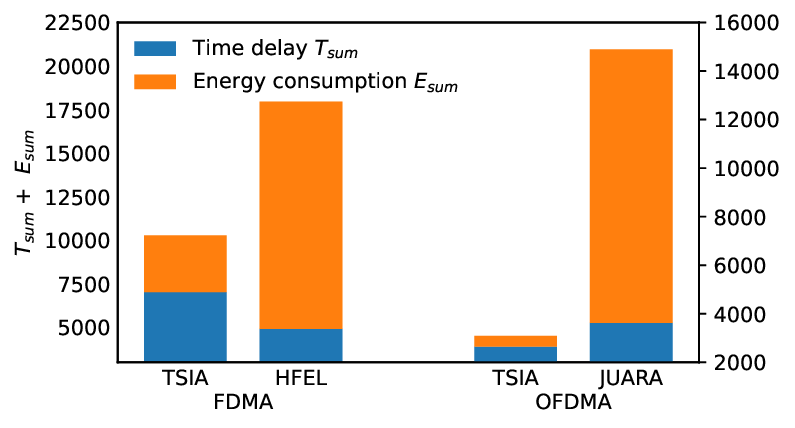}
\caption{Objective value~\eqref{YY} using different user assignment methods. For the proposed TSIA and HFEL, the importance weight $\lambda = 1$.}
\label{FDMA_OFDMA_assign}
\end{figure}

\begin{figure*}[t]
\centering
\begin{minipage}[t]{0.5\linewidth}
\centering
\includegraphics[width=6cm]{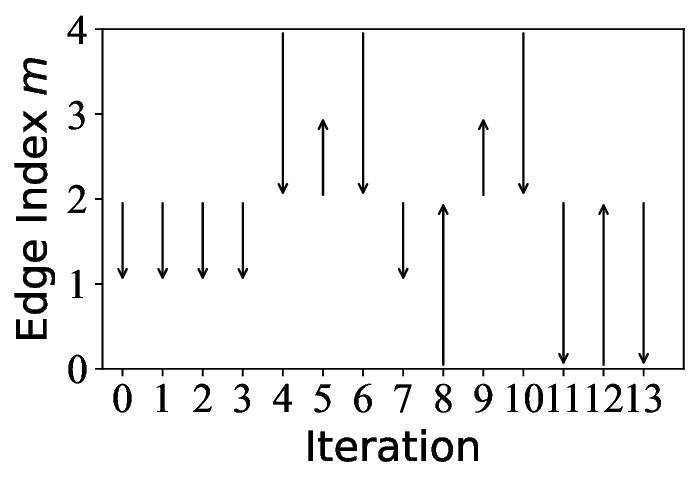}
\end{minipage}%
\begin{minipage}[t]{0.5\linewidth}
\centering
\includegraphics[width=6cm]{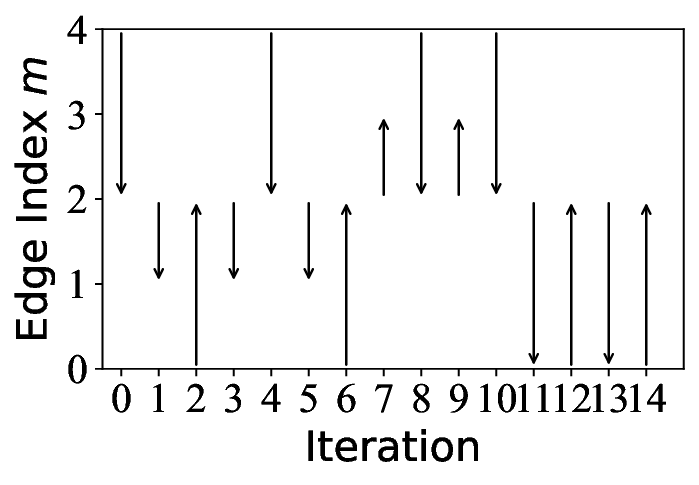}
\end{minipage}%

\begin{minipage}[t]{0.5\linewidth}
\centering
\includegraphics[width=6cm]{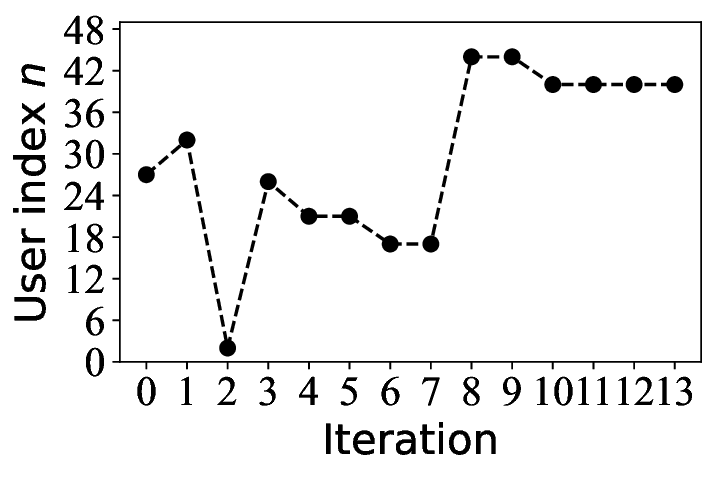}
\label{b_1}
\caption*{(a) TSIA (first stage)}
\end{minipage}%
\begin{minipage}[t]{0.5\linewidth}
\centering
\includegraphics[width=6cm]{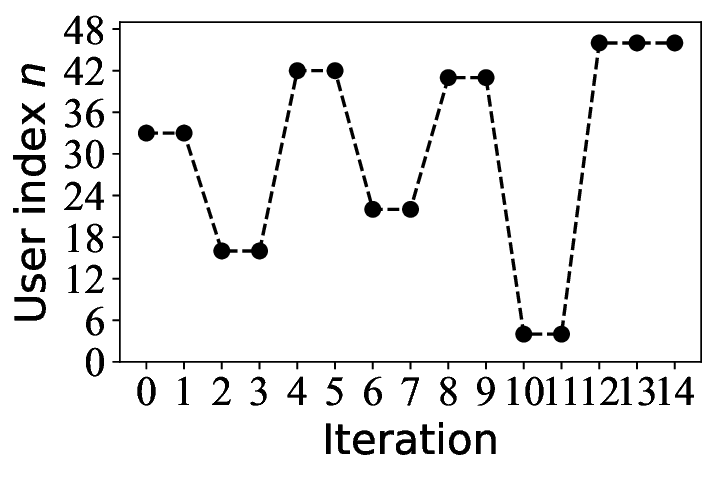}
\label{b_2}
\caption*{(b) TSIA (second stage)}
\end{minipage}\vspace{-0pt}
\caption{User assignment process of TSIA under $\lambda = 1$. The above two figures indicate the index of the edge server that transfers and receives the mobile user. The two figures below show the index of the transferred user. Take (a) as an example. At the 0-th iteration, mobile user 27 is transferred from edge server 2 to edge server 1. After the 12-th iteration, mobile user 40 is repeatedly transferred between edge server 0 and edge server 2. In this case, the first stage of TSIA uses 13 iterations to reach convergence.\vspace{-0pt}}
\label{TSIA-iter}
\end{figure*}

\begin{figure*}[!t]
\centering
\subfigure[Different number of mobile users $N$]{
\begin{minipage}[t]{0.5\linewidth}
\centering
\includegraphics[width=6cm]{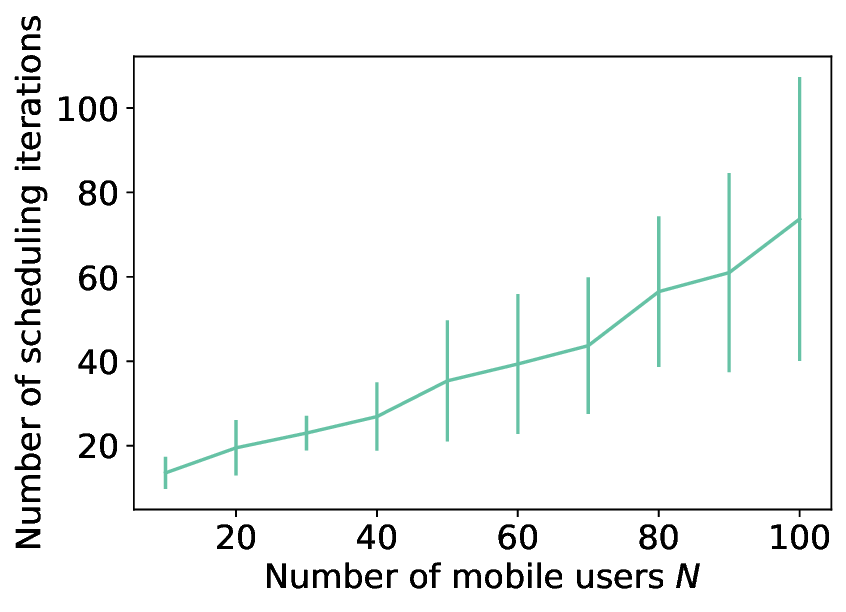}
\label{change_N}
\end{minipage}%
}%
\subfigure[Different number of edge servers $M$]{
\begin{minipage}[t]{0.5\linewidth}
\centering
\includegraphics[width=6cm]{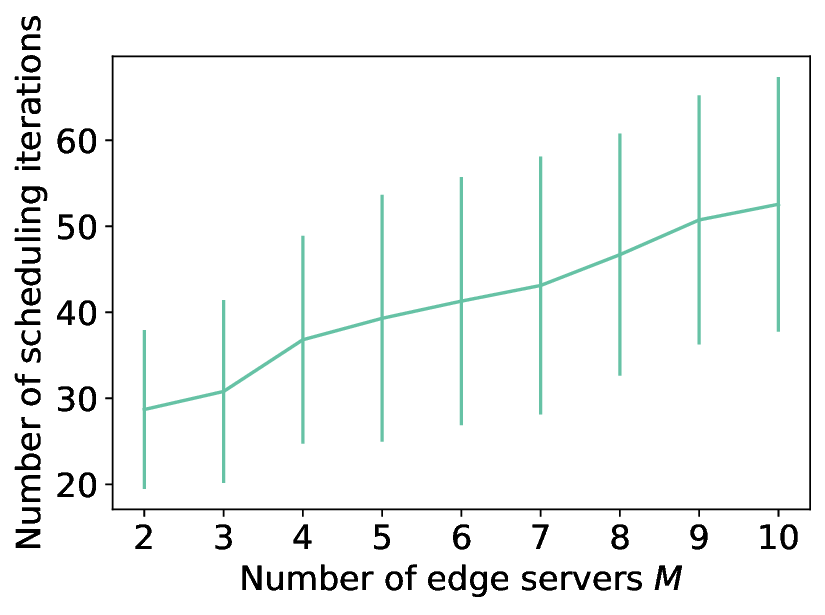}
\label{change_M}
\end{minipage}%
}%
\caption{Number of assigning iterations for TSIA to converge. The curve denotes the average result over five experiments, and the error bar represents the standard deviation.\vspace{-0cm}}
\label{change}
\end{figure*}

\subsection{Results of user assignment methods}\label{edge_result}
In this section, we evaluate the performance of the proposed user assignment algorithms. The experimental parameters are set as follows: $D_n \in [150,220]$, $s = 881\ \text{KB}$, $\lambda = 1$, and $I = 80$. We compare the performance of TSIA with two baseline methods, namely HFEL~\cite{Luo20} and JUARA~\cite{Shengli22}. HFEL adopts a random generation approach for the initial user assignment pattern and utilizes device transferring adjustment and device exchanging adjustment techniques. As HFEL requires a long execution time to reach convergence, we set the numbers of device transferring adjustment and device exchanging adjustment iterations to 100 and 300, respectively. JUARA addresses the user assignment problem by leveraging Lagrangian relaxation and derives the user assignment pattern using Karush-Kuhn-Tucker (KKT) conditions, given a large enough time delay. Next, JUARA iteratively reduces the time delay value by a fixed step until it becomes smaller than the lower bound of the time delay. In our experiments, the total number of assigning iterations for JUARA is set to 100, with the step length adjusted accordingly. For both TSIA and HFEL, we assign an importance weight of $\lambda = 1$.

\begin{figure*}[t]
\centering
\subfigure[FashionMNIST]{
\begin{minipage}[t]{0.33\linewidth}
\centering
\includegraphics[width=6cm]{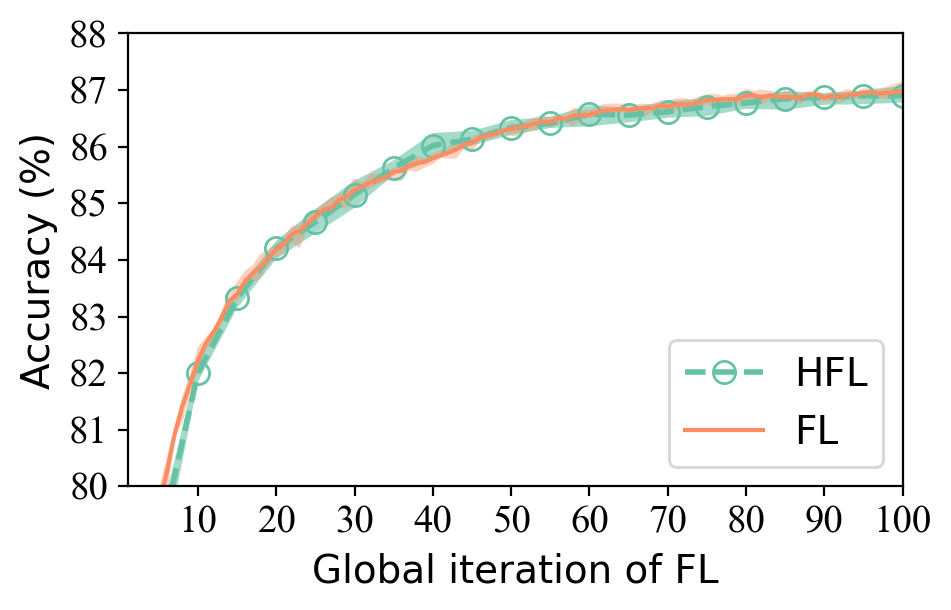}
\end{minipage}%
}%
\subfigure[CIFAR-10]{
\begin{minipage}[t]{0.33\linewidth}
\centering
\includegraphics[width=5.8cm]{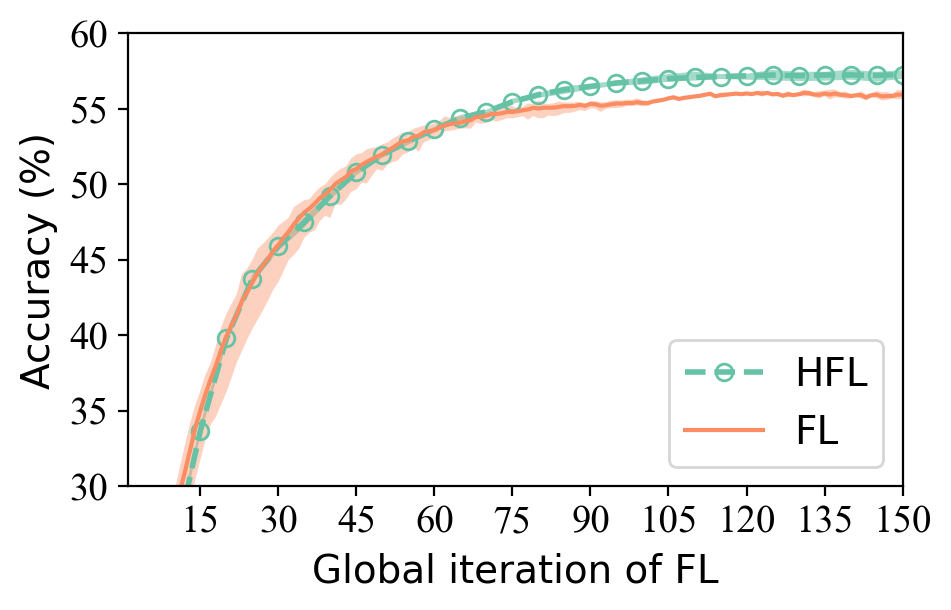}
\end{minipage}%
}%
\subfigure[ImageNette]{
\begin{minipage}[t]{0.33\linewidth}
\centering
\includegraphics[width=5.8cm]{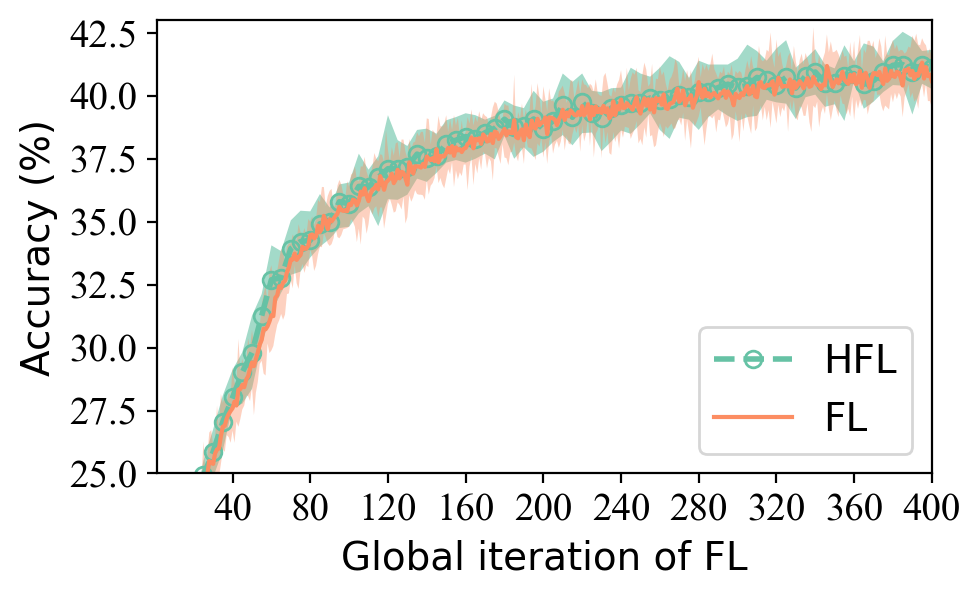}
\end{minipage}%
}%
\caption{Testing accuracy of HFL and FL on multiple datasets. Note that the global iteration of FL contains $L_\text{FL} = 5$ local iterations, while HFL requires $L\times K = 25$ local iterations for each global iteration. Therefore, one global iteration in HFL corresponds to five global iterations in FL. Moreover, the testing accuracy can only be obtained during global aggregation as the testing set is deployed at the cloud server. To show this phenomenon, the accuracy of HFL is represented by the discrete points.\vspace{-0cm}}
\label{datasets}
\end{figure*}

\begin{figure*}[h!]
\centering
\subfigure[FashionMNIST (FDMA)]{
\begin{minipage}[h]{0.33\linewidth}
\centering
\includegraphics[width=5.5cm]{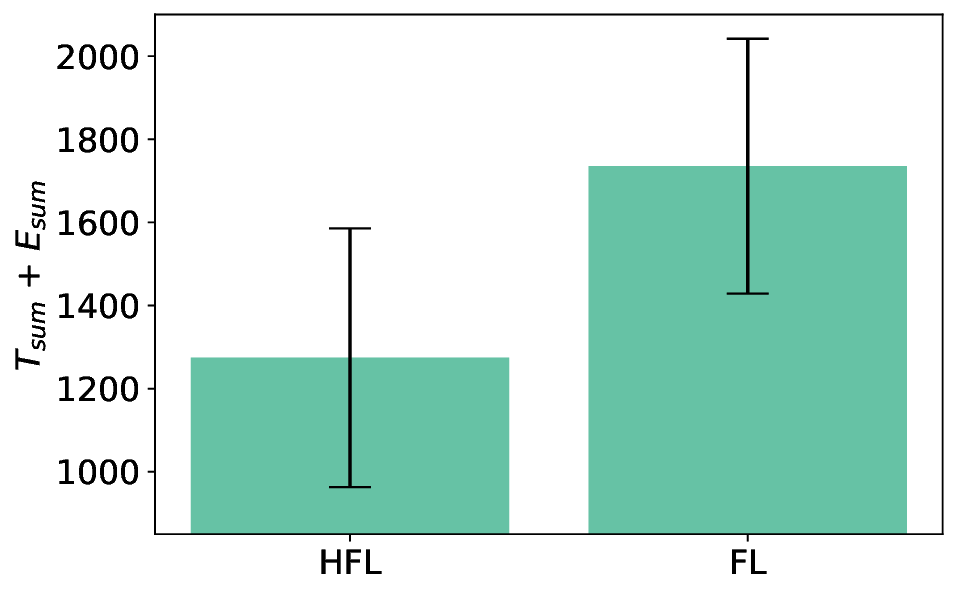}
\end{minipage}%
}%
\subfigure[CIFAR-10 (FDMA)]{
\begin{minipage}[h]{0.33\linewidth}
\centering
\includegraphics[width=5.5cm]{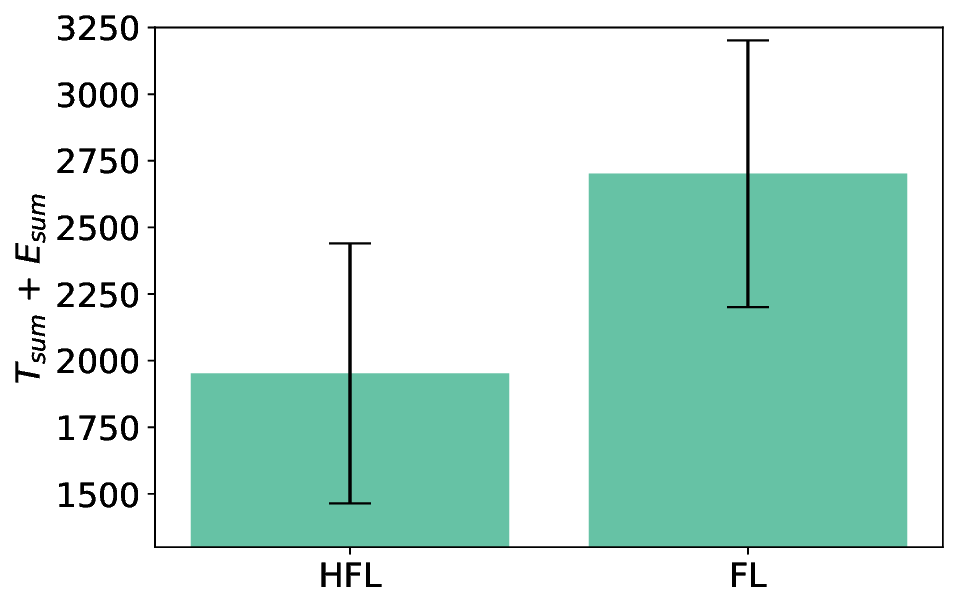}
\end{minipage}%
}%
\subfigure[ImageNette (FDMA)]{
\begin{minipage}[h]{0.33\linewidth}
\centering
\includegraphics[width=5.5cm]{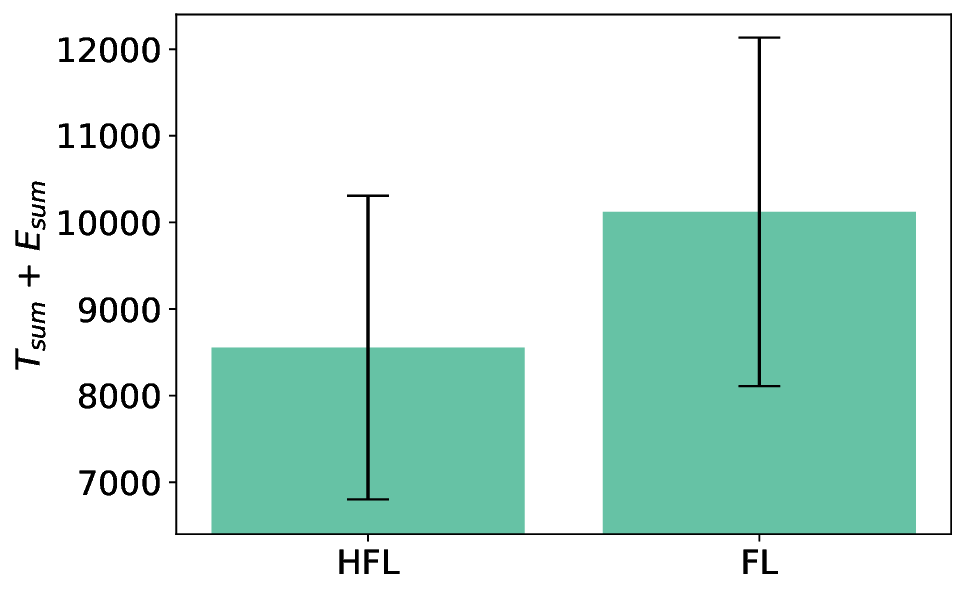}
\end{minipage}%
}%

\subfigure[FashionMNIST (OFDMA)]{
\begin{minipage}[h]{0.33\linewidth}
\centering
\includegraphics[width=5.5cm]{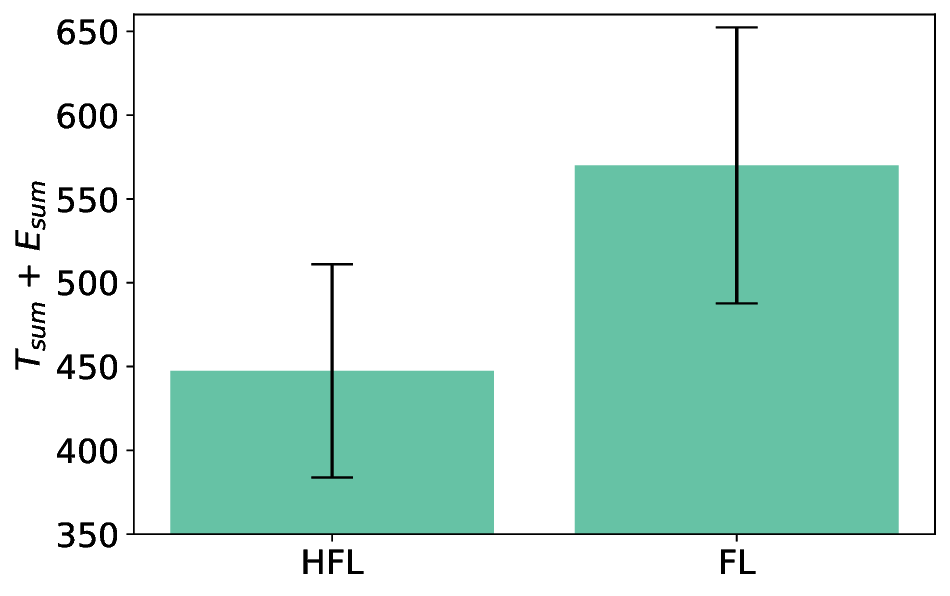}
\end{minipage}%
}%
\subfigure[CIFAR-10 (OFDMA)]{
\begin{minipage}[h]{0.33\linewidth}
\centering
\includegraphics[width=5.5cm]{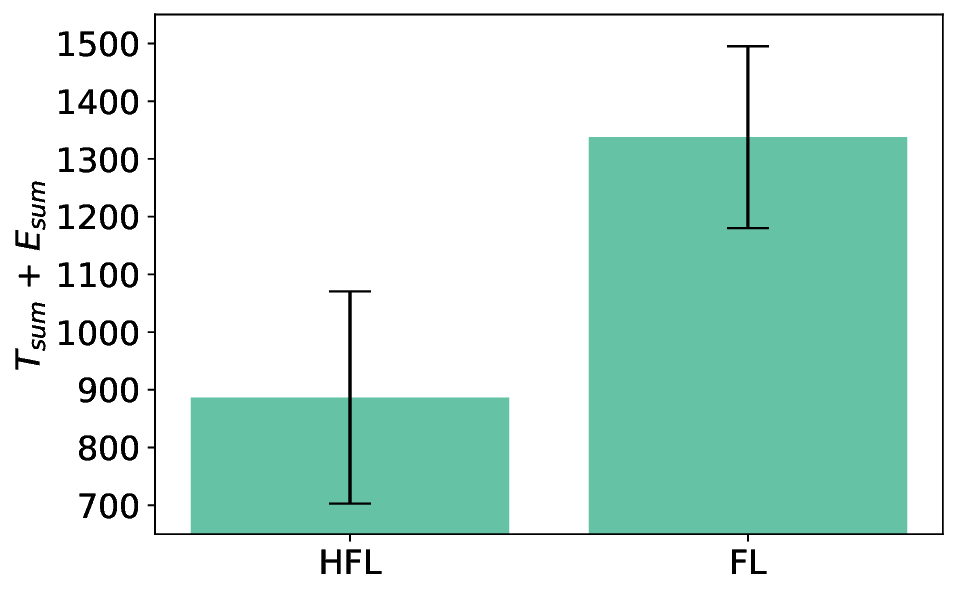}
\end{minipage}%
}%
\subfigure[ImageNette (OFDMA)]{
\begin{minipage}[h]{0.33\linewidth}
\centering
\includegraphics[width=5.5cm]{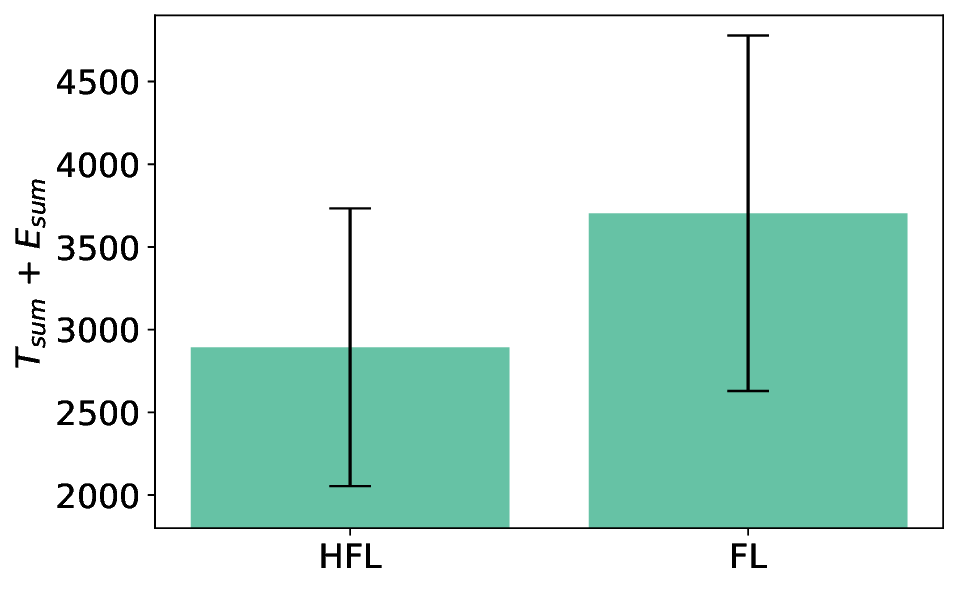}
\end{minipage}%
}%
\caption{Objective value~\eqref{YY} of HFL and FL on multiple datasets.}
\label{compare_FL}
\end{figure*}

Fig.~\ref{FDMA_OFDMA_assign} provides the objective values~\eqref{YY} under different user assignment methods. For the FDMA scheme, the time delay $T_\text{sum}$ is higher in TSIA than in HFEL. However, TSIA effectively reduces the energy consumption $E_\text{sum}$ of the HFL system, leading to a lower overall objective value. This improvement is attributed to TSIA's ability to transfer users from costly servers to economic servers, thereby achieving a balanced workload distribution. In contrast, HFEL relies on random adjustments, making it challenging to achieve an optimal assignment pattern, particularly in large search spaces. For the OFDMA scheme, TSIA outperforms JUARA in terms of both time delay and energy consumption. JUARA does not optimize the CPU frequency and average transmit power of mobile users, and it also fails to consider the energy consumption associated with training or transmitting HFL models. Therefore, TSIA proves advantageous by considering these factors and achieving superior results in terms of minimizing time delay and energy consumption.

We now provide an elaboration on the convergence behavior of TSIA, as depicted in Fig.\ref{TSIA-iter}. As discussed in Section\ref{edge_allo}, TSIA achieves convergence when it revisits a previously encountered user assignment pattern. For instance, in Fig.~\ref{TSIA-iter}(a), the assignment pattern at the 13th assigning iteration matches that of the 11th iteration, indicating the convergence of TSIA. Furthermore, we investigate the relationship between the convergence rate of TSIA and the number of assigning iterations, as influenced by the quantities of mobile users ($N$) and edge servers ($M$). In Fig.\ref{change_N}, we fix $M$ at 5 and vary $N$ from 10 to 100. Similarly, in Fig.\ref{change_M}, $N$ is fixed at 50 while $M$ ranges from 2 to 10. The results reveal that TSIA requires a greater number of assigning iterations to achieve convergence when the count of mobile users or edge servers increases. Additionally, for $N = 50$ and $M = 5$, the number of assigning iterations typically falls within the range of 20 to 50. In comparison, JUARA and HFEL utilize 100 and 400 assigning iterations, respectively. This observation indicates that TSIA shows a considerably faster convergence speed compared with HFEL\footnote{For TSIA, HFEL, and JUARA, an assigning iteration corresponds to a single execution of the spectrum resource management methods. Hence, the convergence speed of TSIA, HFEL, and JUARA can be evaluated based on the number of assigning iterations.}.

\subsection{Comparison between HFL and traditional FL}\label{user_result}
The proposed HFL framework is compared with traditional FL. Both HFL and traditional FL are trained using the same mobile users, allowing for a direct comparison of the learning performance. The numbers of local update $L$ and edge aggregation $K$ in HFL are both 5. The number of local update in FL $L_\text{FL}$ is 5. The experiments are repeated five times. Fig.~\ref{datasets} presents the learning accuracy of the testing set throughout the training process. The accuracy curves of HFL and FL exhibit proximity, indicating that the learning performance of HFL is comparable to that of traditional FL. This finding suggests that transferring a portion of the model aggregation to the edge does not result in a deterioration of learning performance.

Additionally, we evaluate the objective value~\eqref{YY} $E_\text{sum}+\lambda T_\text{sum}$ of both HFL and FL, with $\lambda = 1$. The experiments are conducted five times, with variations in the locations of the mobile users and edge servers, as well as the values of $D_n$ and $c_n$. The cloud server remains fixed at the centre of the area. In FL, the mobile users solely communicate with the cloud server, and the bandwidth resource of this cloud server, denoted as $B_\text{FL}$, corresponds to the summation of the bandwidth resources of all the edge servers in HFL (i.e., $\sum_{m = 1}^{M}B_m$). Figure~\ref{compare_FL} presents the experimental results for both the FDMA and OFDMA schemes. It can be seen that the objective value~\eqref{YY} in HFL is always lower compared with traditional FL. This observation indicates that the communication and computation overheads are alleviated through the involvement of the edge servers. In summary, the proposed HFL framework enables the system to achieve lower system costs while maintaining learning performance compared with the traditional FL framework.

\section{Conclusions}\label{section7}
Hierarchical Federated Learning (HFL) has been proposed to address privacy concerns and relieve network congestion in FL. In this paper, we have studied the mechanism of HFL over wireless networks and investigated the user assignment problem in HFL. We formulated a user assignment problem to minimize the weights sum of the latency and energy consumption of HFL. To solve this problem, we proposed a two-stage iterative algorithm (TSIA) to handle the user assignment problem. TSIA iteratively transfers the mobile users from the straggler server to other servers, thus alleviating the communication and computation overheads of all the edge servers. As user assignment couples with the resource allocation problem, we proposed a spectrum resource optimization algorithm (SROA) to jointly optimize users' bandwidth, CPU frequency, and average transmit power within a single edge server. The experimental results showed that SROA outperforms other benchmarks in minimizing energy consumption and time delay of HFL. Besides, TSIA enables HFL to significantly reduce global costs with a faster convergence speed. Finally, compared with traditional FL, HFL achieves a lower global cost while ensuring learning performance.

\appendices
\section{Proof of LEMMA 1}
\begin{proof}
Let $h(x) = x\log_2(1+\frac{G_n}{x})$ ($x > 0$), the first and second order derivatives of $h(x)$ are derived as
\begin{align}
    h'(x) &= \log_2(1+\frac{G_n}{x}) - \frac{G_n}{\ln2 (x+G_n)}, \label{Qd}\\
    h''(x) &= \frac{-G_n^2}{x(x+G_n)^2\ln2} < 0. \label{Qdd}
\end{align}
$h'(x)$ is a decreasing function as $h''(x) < 0$. Besides, we can observe that $\text{lim}_{x \rightarrow +\infty}h'(x) = 0$, and thus $h'(x)$ is always larger than 0. Therefore, $h(x)$ is an increasing function. Since $\frac{1}{\ln2}(x-1) > \text{log}_2(x)$, we have $x\log_2(1+\frac{G_n}{x}) < x\frac{1}{\ln 2}\frac{G_n}{x} = \frac{G_n}{\ln2}$. According to~\eqref{Xa}, we have
\begin{align}
    f_n &\geq \frac{J_n}{t- \delta_n - \frac{H_n}{b_n\log_2(1+\frac{G_n}{b_n})}}, \label{Fd}\\
        &\geq \frac{J_n}{t- \delta_n  - \frac{\ln2 H_n}{G_n}}. \label{Fdd}
\end{align}
Therefore, the lower bound of $f_n$ can be derived as $\max(0, \frac{J_n}{t- \delta_n-\ln 2\cdot H_n/G_n})$.
\end{proof}

\section{Proof of LEMMA 2}
\begin{proof}
According to~\eqref{Ya}, we have
\begin{align}
    p_n &\geq \frac{2^{\frac{Y_n}{t - F_n}}-1}{Z_n}, \label{Fd}
\end{align}
To observe the relationship between $p_n$ and $b_n$, $Y_n = \frac{IKs}{b_n}$ and $Z_n = \frac{h_n}{N_0 b_n}$ are plugged into \eqref{Fd} as follows
\begin{align}
    p_n &\geq \frac{N_0\Big( b_n 2^{\frac{IK s}{b_n(t - F_n)}}-b_n \Big)}{h_n}, \label{FFd}
\end{align}
Let $y(x) = x 2^{\frac{\xi}{x}}-x$ ($x > 0$), where $\xi = \frac{IKs}{t - F_n}$. The first and second order derivatives of $y(x)$ are derived as
\begin{align}
    y'(x) &= 2^\frac{\xi}{x} + x 2^\frac{\xi}{x}\ln2\frac{-\xi}{x^2} - 1   \notag \\
    & = 2^\frac{\xi}{x} - \frac{\xi \ln2}{x}2^\frac{\xi}{x} -1
\end{align}
\begin{align}
    y''(x) &= 2^\frac{\xi}{x}\ln2\frac{-\xi}{x^2} - \xi\ln2(-\frac{1}{x^2}2^\frac{\xi}{x} + \frac{1}{x}2^\frac{\xi}{x}\frac{-\xi}{x^2}\ln2)\\
    &= \frac{(\xi\ln2)^2}{x^3}2^\frac{\xi}{x} > 0
\end{align}
Therefore, $y'(x)$ is an increasing function. If $x \rightarrow +\infty$, we have
\begin{align}
    &\lim_{x \rightarrow +\infty}y'(x)\\
    =&\lim_{x \rightarrow +\infty}(2^\frac{\xi}{x} - \frac{\xi\ln2}{x}2^\frac{\xi}{x} - 1)\\
    =& 1 - 0 - 1 = 0
\end{align}
Thus, $y'(x) < 0$ for $x > 0$, which indicates that $y(x)$ is a decreasing function. Replacing $x$ with $b_n$, the lower bound of $\frac{N_0}{h_n}y(b_n)$ is obtained when $b_n = b_\text{max}$. 

In terms of $p_n$ and $f_n$, \eqref{Fd} is rewritten as follows.
\begin{align}
    p_n &\geq \frac{2^{\frac{Y_n}{t - \delta_n - \frac{J_n}{f_n}}}-1}{Z_n}  \label{pd}
\end{align}
When $f_n > 0$, $t - \delta_n - \frac{J_n}{f_n}$ is an increasing function w.r.t. $f_n$. Therefore, the lefthand side of~\eqref{pd} is a decreasing function w.r.t. $f_n$. In this case, the lower bound of $p_n$ is obtained when $f_n = f^{\text{max}}_n$.

In conclusion, the lower bound of $p_n$ is achieved when $b_n = b_\text{max}$ and $f_n = f^{\text{max}}_n$.
\end{proof}



\begin{spacing}{1}
 \renewcommand{\refname}{~\\[-20pt]References\vspace{-5pt} }
\bibliographystyle{IEEEtran}
\bibliography{sample-base}

\begin{thebibliography}{10}
\providecommand{\url}[1]{#1}
\csname url@samestyle\endcsname
\providecommand{\newblock}{\relax}
\providecommand{\bibinfo}[2]{#2}
\providecommand{\BIBentrySTDinterwordspacing}{\spaceskip=0pt\relax}
\providecommand{\BIBentryALTinterwordstretchfactor}{4}
\providecommand{\BIBentryALTinterwordspacing}{\spaceskip=\fontdimen2\font plus
\BIBentryALTinterwordstretchfactor\fontdimen3\font minus
  \fontdimen4\font\relax}
\providecommand{\BIBforeignlanguage}[2]{{%
\expandafter\ifx\csname l@#1\endcsname\relax
\typeout{** WARNING: IEEEtran.bst: No hyphenation pattern has been}%
\typeout{** loaded for the language `#1'. Using the pattern for}%
\typeout{** the default language instead.}%
\else
\language=\csname l@#1\endcsname
\fi
#2}}
\providecommand{\BIBdecl}{\relax}
\BIBdecl

\bibitem{LeCun15}
Y.~LeCun, Y.~Bengio, and G.~Hinton, ``Deep learning,'' \emph{Nature}, vol. 521,
  no. 7553, pp. 436--444, 2015.

\bibitem{Ma20}
H.~Ma, E.~X. Huang, and K.-Y. Lam, ``Blockchain-based mechanism for
  fine-grained authorization in data crowdsourcing,'' \emph{Future Generation
  Computer Systems}, vol. 106, pp. 121--134, 2020.

\bibitem{Yang22}
M.~Yang, I.~Tjuawinata, K.~Y. Lam, J.~Zhao, and L.~Sun, ``Secure hot path
  crowdsourcing with local differential privacy under fog computing
  architecture,'' \emph{IEEE Transactions on Services Computing}, vol.~15,
  no.~4, pp. 2188--2201, 2022.

\bibitem{Pandey20}
S.~R. Pandey, N.~H. Tran, M.~Bennis, Y.~K. Tun, A.~Manzoor, and C.~S. Hong, ``A
  crowdsourcing framework for on-device federated learning,'' \emph{IEEE
  Transactions on Wireless Communications}, vol.~19, no.~5, pp. 3241--3256,
  2020.

\bibitem{Liu22}
Z.~Liu, J.~Guo, K.-Y. Lam, and J.~Zhao, ``Efficient dropout-resilient
  aggregation for privacy-preserving machine learning,'' \emph{IEEE
  Transactions on Information Forensics and Security}, pp. 1--1, 2022.

\bibitem{Burg18}
A.~Burg, A.~Chattopadhyay, and K.-Y. Lam, ``Wireless communication and security
  issues for cyber–physical systems and the {I}nternet-of-{T}hings,''
  \emph{Proceedings of the IEEE}, vol. 106, no.~1, pp. 38--60, 2018.

\bibitem{Zhang22}
\BIBentryALTinterwordspacing
T.~Zhang, K.-Y. Lam, J.~Zhao, F.~Li, H.~Han, and N.~Jamil, ``Enhancing
  federated learning with spectrum allocation optimization and device
  selection,'' 2022. [Online]. Available:
  \url{https://arxiv.org/abs/2212.13544}
\BIBentrySTDinterwordspacing

\bibitem{Yang21}
Z.~{Yang}, M.~{Chen}, W.~{Saad}, C.~S. {Hong}, and M.~{Shikh-Bahaei}, ``Energy
  efficient federated learning over wireless communication networks,''
  \emph{IEEE Transactions on Wireless Communications}, vol.~20, no.~3, pp.
  1935--1949, 2021.

\bibitem{Li22}
F.~Li, B.~Shen, J.~Guo, K.-Y. Lam, G.~Wei, and L.~Wang, ``Dynamic spectrum
  access for {I}nternet-of-{T}hings based on federated deep reinforcement
  learning,'' \emph{IEEE Transactions on Vehicular Technology}, vol.~71, no.~7,
  pp. 7952--7956, 2022.

\bibitem{McMahan17}
H.~McMahan, E.~Moore, D.~Ramage, S.~Hampson, and B.~A. y~Arcas,
  ``Communication-efficient learning of deep networks from decentralized
  data,'' in \emph{AISTATS}, 2017.

\bibitem{Ziyao22}
Z.~Liu, J.~Guo, W.~Yang, J.~Fan, K.-Y. Lam, and J.~Zhao, ``Privacy-preserving
  aggregation in federated learning: A survey,'' \emph{IEEE Transactions on Big
  Data}, 2022.

\bibitem{Zhao21}
Y.~Zhao, J.~Zhao, M.~Yang, T.~Wang, N.~Wang, L.~Lyu, D.~Niyato, and K.-Y. Lam,
  ``Local differential privacy-based federated learning for {I}nternet of
  {T}hings,'' \emph{IEEE Internet of Things Journal}, vol.~8, no.~11, pp.
  8836--8853, 2021.

\bibitem{yang2022lead}
H.~Yang, K.-Y. Lam, L.~Xiao, Z.~Xiong, H.~Hu, D.~Niyato, and H.~Vincent~Poor,
  ``Lead federated neuromorphic learning for wireless edge artificial
  intelligence,'' \emph{Nature communications}, vol.~13, no.~1, pp. 1--12,
  2022.

\bibitem{Bonawitz19}
K.~Bonawitz, H.~Eichner, W.~Grieskamp, D.~Huba, A.~Ingerman, V.~Ivanov,
  C.~Kiddon, J.~Kone\v{c}n\'{y}, S.~Mazzocchi, B.~McMahan, T.~Van~Overveldt,
  D.~Petrou, D.~Ramage, and J.~Roselander, ``Towards federated learning at
  scale: System design,'' in \emph{Proceedings of Machine Learning and
  Systems}, A.~Talwalkar, V.~Smith, and M.~Zaharia, Eds., vol.~1, 2019, pp.
  374--388.

\bibitem{Wainakh20}
A.~Wainakh, A.~S. Guinea, T.~Grube, and M.~Mühlhäuser, ``Enhancing privacy
  via hierarchical federated learning,'' in \emph{2020 IEEE European Symposium
  on Security and Privacy Workshops (EuroS\&PW)}, 2020, pp. 344--347.

\bibitem{Wei20}
K.~Wei, J.~Li, M.~Ding, C.~Ma, H.~H. Yang, F.~Farokhi, S.~Jin, T.~Q.~S. Quek,
  and H.~V. Poor, ``Federated learning with differential privacy: Algorithms
  and performance analysis,'' \emph{IEEE Transactions on Information Forensics
  and Security}, vol.~15, pp. 3454--3469, 2020.

\bibitem{Truex20}
S.~Truex, L.~Liu, K.-H. Chow, M.~E. Gursoy, and W.~Wei, ``{LDP}-{F}ed:
  Federated learning with local differential privacy,'' in \emph{Proceedings of
  the Third ACM International Workshop on Edge Systems, Analytics and
  Networking}, 2020, p. 61–66.

\bibitem{Seif20}
M.~Seif, R.~Tandon, and M.~Li, ``Wireless federated learning with local
  differential privacy,'' in \emph{2020 IEEE International Symposium on
  Information Theory (ISIT)}, 2020, pp. 2604--2609.

\bibitem{Wenhao21}
W.~Mou, C.~Fu, Y.~Lei, and C.~Hu, ``A verifiable federated learning scheme
  based on secure multi-party computation,'' in \emph{Wireless Algorithms,
  Systems, and Applications}, Z.~Liu, F.~Wu, and S.~K. Das, Eds., 2021, pp.
  198--209.

\bibitem{Kanagavelu20}
R.~Kanagavelu, Z.~Li, J.~Samsudin, Y.~Yang, F.~Yang, R.~S. Mong~Goh, M.~Cheah,
  P.~Wiwatphonthana, K.~Akkarajitsakul, and S.~Wang, ``Two-phase multi-party
  computation enabled privacy-preserving federated learning,'' in \emph{2020
  20th IEEE/ACM International Symposium on Cluster, Cloud and Internet
  Computing (CCGRID)}, 2020, pp. 410--419.

\bibitem{Byrd20}
D.~Byrd and A.~Polychroniadou, ``Differentially private secure multi-party
  computation for federated learning in financial applications,'' in
  \emph{Proceedings of the First ACM International Conference on AI in
  Finance}, 2020.

\bibitem{Zhang21}
S.~Zhang, Z.~Li, Q.~Chen, W.~Zheng, J.~Leng, and M.~Guo, ``Dubhe: Towards data
  unbiasedness with homomorphic encryption in federated learning client
  selection,'' in \emph{50th International Conference on Parallel Processing},
  ser. ICPP 2021, 2021.

\bibitem{Haokun21}
H.~Fang and Q.~Qian, ``Privacy preserving machine learning with homomorphic
  encryption and federated learning,'' \emph{Future Internet}, vol.~13, no.~4,
  2021.

\bibitem{Madi21}
A.~Madi, O.~Stan, A.~Mayoue, A.~Grivet-Sébert, C.~Gouy-Pailler, and R.~Sirdey,
  ``A secure federated learning framework using homomorphic encryption and
  verifiable computing,'' in \emph{2021 Reconciling Data Analytics, Automation,
  Privacy, and Security: A Big Data Challenge (RDAAPS)}, 2021, pp. 1--8.

\bibitem{Hongkyu21}
H.~Lee, J.~Kim, S.~Ahn, R.~Hussain, S.~Cho, and J.~Son, ``Digestive neural
  networks: A novel defense strategy against inference attacks in federated
  learning,'' \emph{Computers \& Security}, vol. 109, p. 102378, 2021.

\bibitem{Liu21}
X.~Liu, H.~Li, G.~Xu, Z.~Chen, X.~Huang, and R.~Lu, ``Privacy-enhanced
  federated learning against poisoning adversaries,'' \emph{IEEE Transactions
  on Information Forensics and Security}, vol.~16, pp. 4574--4588, 2021.

\bibitem{Huang21}
Y.~Huang, S.~Gupta, Z.~Song, K.~Li, and S.~Arora, ``Evaluating gradient
  inversion attacks and defenses in federated learning,'' in \emph{Advances in
  Neural Information Processing Systems}, M.~Ranzato, A.~Beygelzimer,
  Y.~Dauphin, P.~Liang, and J.~W. Vaughan, Eds., vol.~34.\hskip 1em plus 0.5em
  minus 0.4em\relax Curran Associates, Inc., 2021, pp. 7232--7241.

\bibitem{Xu21}
J.~{Xu} and H.~{Wang}, ``Client selection and bandwidth allocation in wireless
  federated learning networks: A long-term perspective,'' \emph{IEEE
  Transactions on Wireless Communications}, vol.~20, no.~2, pp. 1188--1200,
  2021.

\bibitem{Zeng20Sky}
T.~Zeng, O.~Semiari, M.~Mozaffari, M.~Chen, W.~Saad, and M.~Bennis, ``Federated
  learning in the sky: Joint power allocation and scheduling with uav swarms,''
  in \emph{ICC 2020 - 2020 IEEE International Conference on Communications
  (ICC)}, 2020, pp. 1--6.

\bibitem{Chen21}
M.~Chen, H.~V. Poor, W.~Saad, and S.~Cui, ``Convergence time optimization for
  federated learning over wireless networks,'' \emph{IEEE Transactions on
  Wireless Communications}, vol.~20, no.~4, pp. 2457--2471, 2021.

\bibitem{Yang20}
Z.~Yang, M.~Chen, W.~Saad, C.~S. Hong, and M.~Shikh-Bahaei, ``Delay
  minimization for federated learning over wireless communication networks,''
  in \emph{Proc. Int. Conf. Machine Learning Workshop}, July 2020.

\bibitem{Shi21}
W.~Shi, S.~Zhou, Z.~Niu, M.~Jiang, and L.~Geng, ``Joint device scheduling and
  resource allocation for latency constrained wireless federated learning,''
  \emph{IEEE Transactions on Wireless Communications}, vol.~20, no.~1, pp.
  453--467, 2021.

\bibitem{Zeng20}
Q.~{Zeng}, Y.~{Du}, K.~{Huang}, and K.~K. {Leung}, ``Energy-efficient radio
  resource allocation for federated edge learning,'' in \emph{2020 IEEE
  International Conference on Communications Workshops (ICC Workshops)}, 2020,
  pp. 1--6.

\bibitem{Dinh20}
C.~T. Dinh, N.~H. Tran, M.~N.~H. Nguyen, C.~S. Hong, W.~Bao, A.~Y. Zomaya, and
  V.~Gramoli, ``Federated learning over wireless networks: Convergence analysis
  and resource allocation,'' \emph{IEEE/ACM Trans. Netw.}, vol.~29, no.~1, p.
  398–409, feb 2021.

\bibitem{Luo20}
S.~{Luo}, X.~{Chen}, Q.~{Wu}, Z.~{Zhou}, and S.~{Yu}, ``{HFEL}: Joint edge
  association and resource allocation for cost-efficient hierarchical federated
  edge learning,'' \emph{IEEE Transactions on Wireless Communications},
  vol.~19, no.~10, pp. 6535--6548, 2020.

\bibitem{Liu20}
L.~Liu, J.~Zhang, S.~Song, and K.~B. Letaief, ``Client-edge-cloud hierarchical
  federated learning,'' in \emph{ICC 2020 - 2020 IEEE International Conference
  on Communications (ICC)}, 2020, pp. 1--6.

\bibitem{Jinliang20}
\BIBentryALTinterwordspacing
J.~Yuan, M.~Xu, X.~Ma, A.~Zhou, X.~Liu, and S.~Wang, ``Hierarchical federated
  learning through {LAN-WAN} orchestration,'' 2020. [Online]. Available:
  \url{https://arxiv.org/abs/2010.11612}
\BIBentrySTDinterwordspacing

\bibitem{Mhaisen21}
N.~Mhaisen, A.~Awad, A.~Mohamed, A.~Erbad, and M.~Guizani, ``Optimal user-edge
  assignment in hierarchical federated learning based on statistical properties
  and network topology constraints,'' \emph{IEEE Transactions on Network
  Science and Engineering}, 2021.

\bibitem{Shengli22}
S.~Liu, G.~Yu, X.~Chen, and M.~Bennis, ``Joint user association and resource
  allocation for wireless hierarchical federated learning with iid and non-iid
  data,'' \emph{IEEE Transactions on Wireless Communications}, vol.~21, no.~10,
  pp. 7852--7866, 2022.

\bibitem{proof}
\url{https://drive.google.com/file/d/10RCeVdAA30C-JqQ5tZNQS-iqBG7d2_si/view?usp=sharing}.

\bibitem{xiao2017}
\BIBentryALTinterwordspacing
H.~Xiao, K.~Rasul, and R.~Vollgraf, ``{Fashion-MNIST}: a novel image dataset
  for benchmarking machine learning algorithms,'' 2017. [Online]. Available:
  \url{https://arxiv.org/abs/1708.07747}
\BIBentrySTDinterwordspacing

\bibitem{Krizhevsky09}
A.~Krizhevsky, ``Learning multiple layers of features from tiny images,'' 2009.

\bibitem{Howard20}
J.~Howard and S.~Gugger, ``Fastai: A layered {API} for deep learning,''
  \emph{Information}, vol.~11, no.~2, 2020.

\bibitem{Olga15}
O.~Russakovsky, J.~Deng, H.~Su, J.~Krause, S.~Satheesh, S.~Ma, Z.~Huang,
  A.~Karpathy, A.~Khosla, M.~Bernstein, A.~C. Berg, and L.~Fei-Fei, ``{ImageNet
  Large Scale Visual Recognition Challenge},'' \emph{International Journal of
  Computer Vision (IJCV)}, vol. 115, no.~3, pp. 211--252, 2015.

\end{thebibliography}

\end{spacing}

\end{document}